\documentclass[12pt,twoside]{article}
\title{Bayesian detection of abnormal segments in multiple time series} 
\author{Lawrence Bardwell and Paul Fearnhead}
\date{\today}  

%
%
\usepackage[left = 2.5cm, right = 2.5cm, top = 2cm, bottom = 2.5cm]{geometry} 
\usepackage{setspace}
\parskip 0.1cm
\setlength{\parindent}{0pt}
\usepackage{graphicx}
\usepackage{amsfonts} 
\usepackage{amsmath}
\usepackage{float}
\restylefloat{table}
\usepackage{natbib} 
\usepackage{placeins}
\usepackage{caption}
\usepackage{subcaption} 
\usepackage{amsthm}
\numberwithin{equation}{section}
\usepackage[ruled]{algorithm2e}
\usepackage{titlepic}
\DeclareMathAlphabet{\mathpzc}{OT1}{pzc}{m}{it}

\usepackage{color}
\usepackage[toc,page]{appendix}
\usepackage{fancyhdr} 
\setlength{\headheight}{15pt}
\usepackage{url}
\usepackage{xcolor}
\usepackage{bbm}
\usepackage{verbatim}
\newcommand{\emp}[1]{{\textcolor{blue}{#1}}}

\newtheorem{theorem}{Theorem}[section]

\newtheorem{lemma}[theorem]{Lemma}

\usepackage[pdftex]{hyperref}  
%

\graphicspath{{Figures/}}

\begin{document}
\bibliographystyle{apalike}  
\maketitle	
\begin{abstract}
We present a novel Bayesian approach to analysing multiple time-series with the aim of detecting abnormal regions. These are regions
where the properties of the data change from some normal or baseline behaviour. We allow for the possibility that
such changes will only be present in a, potentially small, subset of the time-series. We develop a general model for this problem, and
show how it is possible to accurately and efficiently perform Bayesian inference, based upon recursions that enable independent sampling
from the posterior distribution. 
A motivating application for this problem comes from detecting copy number variation (CNVs), using data from multiple
individuals. Pooling information across individuals can increase the power of detecting CNVs, but often a specific CNV
will only be present in a small subset of the individuals. We evaluate the Bayesian method on both simulated and real CNV data, and give
evidence that this approach is more accurate than a recently proposed method for analysing such data. 
\end{abstract}
{\bf Keywords:} BARD, Changepoint Detection, Copy Number Variation, PASS


%
\section{Introduction}
\label{sec:intro}


In this paper we consider the problem of detecting abnormal (or outlier) segments in multivariate time series. 
We assume that the series has some normal or baseline behaviour but that in certain intervals or segments of time a subset 
of the dimensions of the series has some kind of altered or abnormal behaviour. By the term abnormal behaviour
we mean some change in distribution of the data away from the baseline distribution. For example, this could include a change in mean,
variance, auto-correlation structure. In particular our work is 
concerned with situations where the size of this subset is only a small proportion of the total number of dimensions. 
We attempt to do this in a fully Bayesian framework.

This problem is increasingly common across a range of applications where the detection of 
abnormal segments (sometimes known as recurrent signal segments) is of interest (particularly 
in high dimensional and/or very noisy data). Some example applications include the analysis of 
the correlations between sensor data from different vehicles 
\citep{Spiegel:2011:PRC:2003653.2003657} or for intrusion detection in large interconnected 
computer networks \citep{1387011}. Another related application involves detecting common and 
potentially more subtle objects in a number of images, for example \citet{jin} and the 
references therein look at this in relation to multiple images taken of astronomical bodies.   

We will focus in particular on one specific example of this type of problem, namely that of detecting copy number variants (CNV's) in DNA sequences. 
A CNV is a type of structural variation that results in a genome having an abnormal (generally $\neq 2$) number of copies of a segment of DNA, such 
as a gene. Understanding these is important as these variants have been shown to account for much of the variability within 
a population. For a more detailed overview of this topic see \citet{overview_CNV,rare_copy_number} and the references therein. 

Data on CNVs for a given cell or individual is often in the form of ``log-R ratios'' for a range of probes, each associated with different locations
along the genome. These are calculated as log base 2 of the ratio of the measured probe intensity to the reference intensity for a given probe. Normal
regions of the genome would have log-R ratios with a mean of 0, whereas CNVs would have log-R ratios with a mean that is away from zero. 

\begin{figure}[t!]
\centering
\includegraphics[scale=0.75]{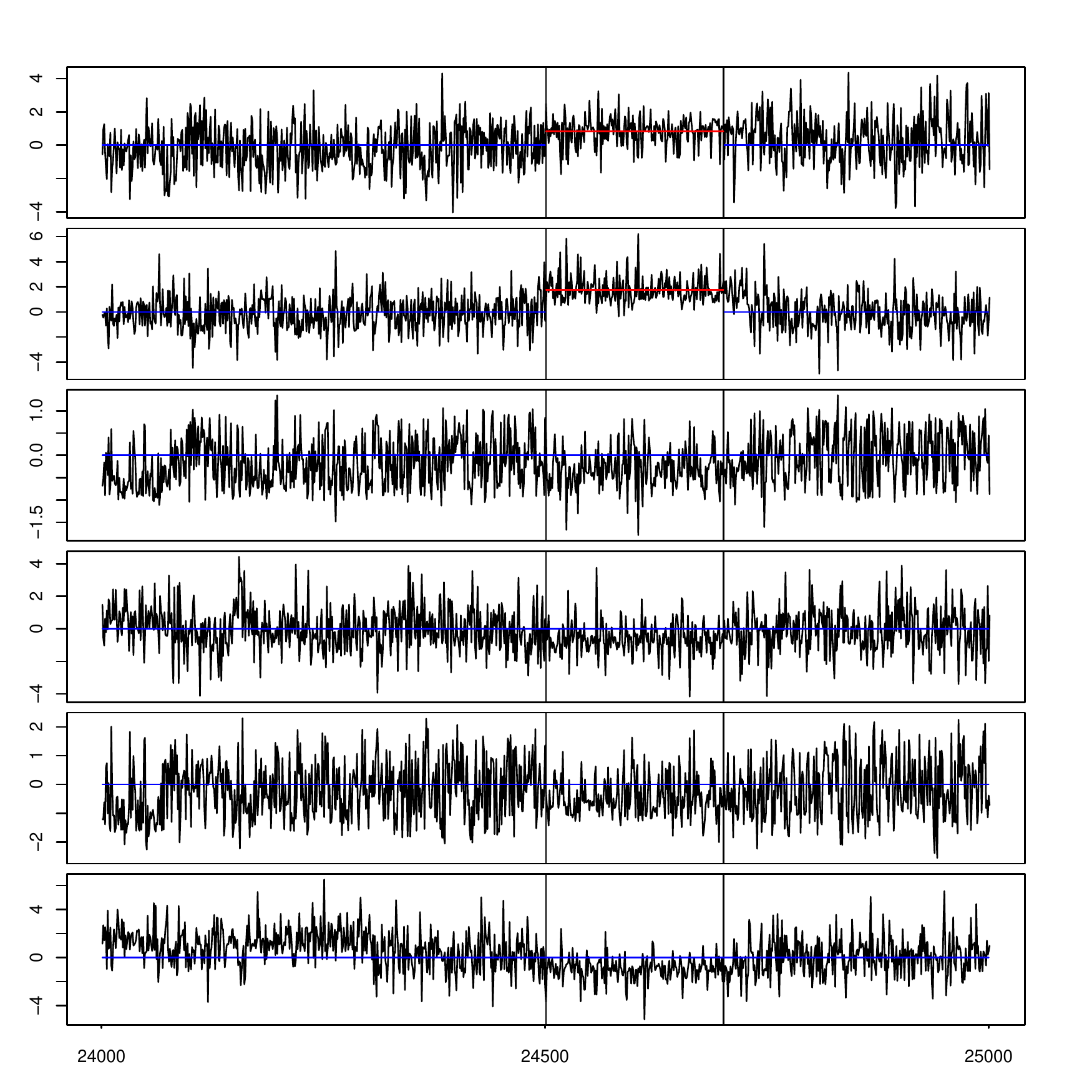}
\caption{Log-R ratios from 6 individuals for a small portion of chromosome 16. We indicate the baseline level (mean zero) by a horizontal line in blue and the identified CNV (abnormal region) is highlighted between two vertical black lines with the mean of the affected individuals in red. For this CNV only the first two individuals (NA10851 and NA12239) are affected.}
\label{fig:real_data_examp}
\end{figure}

Figure \ref{fig:real_data_examp} gives an example of such data from 6 individuals. We can see that there is substantial noise in the data, and
each CNV may cover only a relatively small region of the genome. Both these factors mean that it can be difficult to accurately detect CNVs by
analysing data from a single individual or cell. To increase the power to identify CNVs we can pool information by jointly analysing data 
from multiple individuals. However this is complicated as a CNV may be observed for only a subset of the individuals. For example, for the data in 
Figure \ref{fig:real_data_examp}, which shows data from a small portion of  chromosome 16,
we have identified a single CNV which affects only the first two individuals. 
This can seen by the raised means (indicated by the red lines) in these 
two series for a segment of data. By comparison, the other individuals 
are unaffected in this segment. 


Whilst there has been substantial research into methods for detecting outliers \cite[]{Tsay01122000,2006} or abrupt changes in data 
\cite[]{Olshen01102004,JTSA:JTSA12035,WYSE2011,Frick:2014}, the problem
of identifying outlier regions in just a subset of dimensions has received less attention. Exceptions include methods 
 described in \citet{detec_simul_gene_copy} and \citet{siegmund2011}. However \cite{rare_copy_number} argue that these
 methods are only able to detect common variants, that is abnormal segments for which a large proportion of the dimensions have undergone the 
 change.  \cite{rare_copy_number} propose a method, the PASS algorithm, which is also able to detect rare variants. 
 
The methods of \cite{siegmund2011} and \cite{rare_copy_number} are based on defining an appropriate test-statistic for whether a region is abnormal
for a subset of dimensions, and then recursively using this test-statistic to identify abnormal regions. As such the output of these methods is a 
point estimate of the which are the abnormal regions. Here we introduce a Bayesian approach to detecting abnormal regions. This is able to both
give point estimates of the number and location of the abnormal regions, and also to give measures of uncertainty about these. We show how it is
possible to efficiently simulate from the posterior distribution of the number and location of abnormal regions, through
using recursions similar to those from multiple changepoint detection \cite[]{Barry/Hartigan:1992,Fearnhead2006,FearnheadVasileiou2009}. We call the resulting algorithm,
Bayesian Abnormal Region Detector (BARD). 

The outline of the paper is as follows. In the next section we introduce our model, both for the general problem of detecting abnormal regions, and 
also for the specific CNV application. In Section \ref{sec:inference} we derive the recursions that enable us to draw iid samples from the posterior,
as well as a simple approximation to these recursions that results in an algorithm, BARD, that scales linearly with the length of data set. We then present theoretical results that
show that BARD can consistently estimate the absence of abnormal segments, and the location of any abnormal segments, and is robust to some mis-specification of the priors.
In Section \ref{sec:results} we evaluate BARD for the CNV application on both simulated and real data. Our results suggest that BARD is more accurate than PASS,
particularly in terms of having fewer fales positives. Furthermore, we see evidence that posterior probabilities are well-calibrated and hence
are accurately representing the uncertainty in the inferences. The paper ends with a discussion.

\section{The Model}
\label{sec:model}

We shall now describe the details of our model. Consider a multiple time series of dimension $d$ and length $n$,
$\mathbf{Y}_{1:n} = ( \mathbf{Y}_{1}, \mathbf{Y}_{2} , \hdots , \mathbf{Y}_{n} )$  where  $\mathbf{Y}_{i} = \left( Y_{i,1} , Y_{i,2} , \hdots , Y_{i,d} \right)^{T}.$ We model this
data through introducing a hidden state process, $X_{1:n}$. The hidden state process will contain information about where the abnormal segments of the data are. Our model is defined through specifying the
distribution of the hidden state process, $p(x_{1:n})$, and the conditional distribution of the data given the state process, $p(\mathbf{y}_{1:n} | x_{1:n})$.  These are defined in Sections \ref{sec:state_model} and
\ref{sec:like_model} respectively.

Our interest lies in inference about this hidden state process given the observations. This involves calculating the posterior distribution for the states
\begin{align} \label{eq:hidden_state_post}
p(x_{1:n} | \mathbf{y}_{1:n}) \propto p(x_{1:n} , \mathbf{y}_{1:n}) = p(x_{1:n}) p(\mathbf{y}_{1:n} | x_{1:n}).
\end{align}
It should be noted that these probabilities will depend on a set of hyper-parameters. These parameters are initially assumed to be known, however we will later discuss performing inference for them.    

\subsection{Hidden State Model}
\label{sec:state_model}

The hidden state process will define the location of the abnormal segments. We will model the location of these segments through a renewal process. The length of a given segment is drawn 
from some distribution which depends on the segment type, and is independent of all other segment lengths. We assume a normal segment is always followed by an abnormal segment, but allow for either a normal or abnormal
segment to follow an abnormal one. The latter is because each abnormal segment may be abnormal in a different way, for example with different subsets of the time-series being affected. 
This will become clearer when we discuss the likelihood model in Section~\ref{sec:like_model}.   

To define such a model we need distributions for the lengths of normal and abnormal segments. We denote the cumulative distribution functions of these lengths by $G_N(t)$ and $G_A(t)$ respectively. We also need to specify the probability that an abnormal segment is followed by either a normal or abnormal segment. We denote these
probabilities as $\pi_N$ and $\pi_A$ respectively, with $\pi_N=1-\pi_A$. 

Note that the first segment for the data will have a different distribution to other segments as it may have started at some time prior to when we started collecting data. We can define this distribution in a way that
is consistent with our underlying model by assuming the process for the segments is at stationarity and that we start observing it at an arbitrary time. Renewal theory \citep{cox1962renewal} then gives the distribution function for the length of the first segment. If the first segment is normal, then we define its cumulative distribution function as
\[
 G_{0N}(t)=\sum_{s=1}^t \frac{1-G_N(s)}{E_N},
\]
where $E_N$ is the expected length of a normal segment. The cumulative distribution function for the first segment conditional on it being abnormal, $G_{0A}(t)$, is similarly defined.

Formally, we define our hidden state process $X_t$ as $X_t = (C_t,B_t)$ where $C_t$ is the end of the previous segment prior to time $t$ and $B_t$ is the type of the current segment. So $C_t\in\{0,\ldots,t-1\}$ with $C_t=0$ 
denoting that the current segment is the first segment. We use the notation that $B_t=N$ if the 
current segment is normal, and $B_t=A$ if not. This state process is Markov, and thus we can write 
 \begin{align} \label{eq:hidden_state_decomp}
\begin{split}
p(x_{1:n}) &= p(c_{1:n},b_{1:n}) \\
&= \Pr(C_1=c_1,B_1=b_1) \prod_{i=1}^{n-1} \Pr( C_{i+1}=c_{i+1} , B_{i+1}=b_{i+1} |C_{i} = c_{i} ,B_{i}= b_{i} ).
\end{split}
\end{align}
The decomposition in \eqref{eq:hidden_state_decomp} gives us two aspects of the process to define, namely the transition probabilities $\Pr( C_{i+1}=c_{i+1} , B_{i+1}=b_{i+1} | c_{i} , b_{i} )$ 
and the initial distribution, $\Pr(C_1=c_1,B_1=b_1)$.

Firstly consider the transition probabilities. Now either $C_{t+1}=C_t$ or $C_{t+1}=t$ depending on whether a new segment starts between time $t$ and $t+1$. The probability of a new segment starting
is just the conditional probability of a segment being of length $t-C_t$ given that is at least $t-C_t$. If $C_{t+1}=C_t$, then we must have $B_{t+1}=B_t$, otherwise the distribution of the type of the new segment
depends on the type of the previous segment as described above.

Thus for $i=1,\ldots,t-1$ we have
\begin{align}
   \begin{split}
\Pr(C_{t+1}=j , B_{t+1} = k | C_t = i , B_t = N) &= \begin{cases} 
     \frac{1-G_{N}(t-i) }{1-G_{N}(t-i-1)} & \textrm{ if $j=i$ and  $k=N$,} \\
     \frac{ G_{N}(t-i) - G_{N}(t-i-1) }{ 1 - G_{N}(t-i-1) } & \textrm{ if $j=t$ and  $k=A$,} \\
	0 & \textrm{ otherwise,}
   \end{cases}  \\
\Pr(C_{t+1}=j , B_{t+1} = k | C_t = i , B_t = A) &= \begin{cases} 
      \frac{ 1 - G_{A}(t-i) }{1-G_{A}(t-i-1)} & \textrm{ if $j=i$ and  $k=A$,} \\
     \pi_{A} \left( \frac{ G_{A}(t-i) - G_{A}(t-i-1) }{ 1 - G_{A}(t-i-1) } \right) & \textrm{ if $j=t$ and  $k=A$,} \\
	  \pi_{N}  \left(  \frac{ G_{A}(t-i) - G_{A}(t-i-1) }{ 1 - G_{A}(t-i-1) }  \right)  & \textrm{ if $j=t$ and  $k=N$,} \\
     0 & \textrm{ otherwise}.
   \end{cases}
   \end{split}
\end{align}
For $i=0$, that is when $C_t=0$, we replace $G_N(\cdot)$ and $G_A{\cdot}$ with $G_{0N}(\cdot)$ and $G_{0A}(\cdot)$ respectively.

Finally we need to define the initial distribution for $X_1=(B_1,C_1)$. Firstly note that $C_1=0$ so we need only the distribution of $B_1$. We define this as the stationary distribution of the $B_t$ process.
This is \cite[see for example Theorem 5.6 of][]{kulkarni2012introduction}
\[
 \Pr(B_1=N)=\frac{ \pi_{N}E_{N} }{ \pi_{N}E_{N} + E_{A} },~~~\Pr(B_1=A)=1-\Pr(B_1=N),
\]
where $E_N$ and $E_A$ are the expected lengths of normal and abnormal segments respectively.
 
\subsection{Likelihood model}
\label{sec:like_model}

The hidden process $X_{1:n}$ described above partitions the time interval into contiguous non-overlapping segments each of which is either normal, $N$, or abnormal, $A$. 
Now conditional on this process we want to define a likelihood for the observations, $p(\mathbf{y}_{1:n} | x_{1:n} )$.

To make this model tractable we assume a conditional independence property between segments, this means that if we knew the locations of segments and their types then data from different segments are independent. 
Thus when we condition on $C_t$ and $B_t$ the likelihood for the first $t$ observations factorises as follows     
\begin{align} \label{eq:cond_ind}
p(\mathbf{y}_{1:t} | C_t =j , B_t) = p( \mathbf{y}_{1:j} | C_t=j,B_t) p(\mathbf{y}_{j+1:t} |C_t=j, B_t). 
\end{align}

The second term in equation \eqref{eq:cond_ind} is the marginal likelihood of the data, $\mathbf{Y}_{j+1:t}$, given it comes from a segment that has type $B_t$. 
We introduce the following notation for these segment marginal likelihoods, where for $s\geq t$,
\begin{align} \label{eq:marg_lik}
\begin{split}
&P_N(t,s) = \Pr(\mathbf{y}_{t:s} |C_s=t-1, B_s = N  ), \\
&P_A(t,s) = \Pr(\mathbf{y}_{t:s} |C_s =t-1, B_s = A ),
\end{split}
\end{align}
and define $P_N(t,s)=1$ and $P_A(t,s)=1$ if $s<t$.

Now using the above factorisation we can write down the likelihood conditional on the hidden process. Note that we can condition on $X_t$ rather than the full history $X_{1:n}$ in each of the factors in \eqref{eq:likelihood_for_model} due to the conditional independence assumption on the segments 
\begin{align} \label{eq:likelihood_for_model}
\begin{split}
  p(\mathbf{y}_{1:n} | x_{1:n} ) &= \prod_{t=1}^{n} p( \mathbf{y}_t | x_{1:n} , \mathbf{y}_{1:(t-1)} ) \\
&= \prod_{t=1}^{n} p( \mathbf{y}_t | C_t , B_t , \mathbf{y}_{(C_t + 1):(t-1)} ). 
\end{split}
\end{align}
The terms on the right-hand side of equation~\eqref{eq:likelihood_for_model} can then be written  in terms of the segment marginal likelihoods 
\begin{align} \label{eq:cond_like}
p( \mathbf{y}_t | C_t , B_t , \mathbf{y}_{(C_t + 1):(t-1)} )
= \frac{ P_{B_t}(C_t + 1 , t) }{ P_{B_t}(C_t + 1 , t-1) }.
\end{align}
Thus our likelihood is specified through defining appropriate forms for the marginal likelihoods for normal and abnormal segments.

\subsubsection{Model for data in normal segments}

For a normal segment we model that the data for all dimensions of the series are realisations from some known distribution, $\mathcal{D}$, and these realisations are independent over both time and dimension. 
Denote the density function of the distribution $\mathcal{D}$ as $f_{\mathcal{D}}(\cdot)$. We can write down the segment marginal likelihood as
\begin{align}
\label{eq:normal_marg}
P_N(t,s) = \prod_{k=1}^{d} \prod_{i=t}^{s} f_{\mathcal{D}}(y_{i,k}). 
\end{align}

\subsubsection{Model for data in abnormal segments}

For abnormal segments our model is that data for a subset of the dimensions are drawn from $\mathcal{D}$, with the data for the remaining dimensions being independent realisations from a different distribution, $\mathcal{P}_{\theta}$, which depends on a segment specific parameter $\theta$. We denote the density function for this distribution as $f_{\mathcal{P}}(\cdot| \theta)$.

Our model for which dimensions have data drawn from $\mathcal{P}_{\theta}$ is that this occurs for dimension $k$ with probability $p_k$, independently of the other dimensions. Thus if we have an abnormal
segment with data $\mathcal{Y}_{t:s}$, with segment parameter $\theta$, the likelihood of the data associated with the $k$th dimension is
\[
 p_k\prod_{i=t}^s f_{\mathcal{P}}(y_{i,k}|\theta)+(1-p_k)\prod_{i=t}^s f_{\mathcal{D}}(y_{i,k}).
\]
Thus by independence over dimension
\[
 p(\mathbf{y}_{t:s}|\theta)=\prod_{k=1}^d \left( p_k\prod_{i=t}^s f_{\mathcal{P}}(y_{i,k}|\theta)+(1-p_k)\prod_{i=t}^s f_{\mathcal{D}}(y_{i,k}) \right).
\]

Our model is completed by a prior for $\theta$, $\pi(\theta)$. 
To find the marginal likelihood $P_A(t,s)$ we need to integrate out $\theta$ from $p(\mathbf{y}_{t:s}|\theta)$
\begin{align}
\label{eq:abnormal_marg}
P_A(t,s) = \int p( \mathbf{y}_{t:s} | \theta  ) \pi(\theta) \,\mbox{d} \theta.
\end{align}
In practice this integral will need to be calculated numerically, which is feasible if $\theta$ is low-dimensional.

\subsubsection{CNV example}
\label{sec:cnv_example}


In Section \ref{sec:intro} we discussed the copy number variant (CNV) application and showed some real data in Figure \ref{fig:real_data_examp}.
From the framework described above we now need to specify a model for normal and abnormal segments. 
Following \citet{rare_copy_number} we model the data as being normally distributed with constant variance but differing means either zero or $\mu$ depending on whether we are in a normal or abnormal segment. This model also underpins the simulation studies that we present in Section \ref{sec:results}. 

Using the notation from the more general framework discussed above the two distributions for normal and abnormal segments are  
\begin{align*}
\mathcal{D} &\sim N(0,\sigma^2)  \\
\mathcal{P}_{\mu} &\sim N(\mu,\sigma^2).
\end{align*} 
We assume that the variance $\sigma^2$ is constant and known (in practice we would be able to estimate it from the data). 

Having specified these two distributions we then need to calculate marginal likelihoods for normal and abnormal segments given by equations 
\eqref{eq:normal_marg} and \eqref{eq:abnormal_marg} respectively. Calculating the marginal likelihood for a normal segment is simple because 
of independence over time and dimension as shown in equation \eqref{eq:normal_marg}. However calculating $P_A(\cdot , \cdot)$ is more challenging,
as there is no conjugacy between $p(\mathbf{y}|\mu)$ and $\pi(\mu)$ so we can only numerically approximate the integral. Calculating the
numerical approximation is fast as it is a one-dimensional integral.

In the simulation studies and results we take the prior for $\mu$ to be uniform on a region that excludes values of $\mu$ close to zero. For CNV data such a prior seems reasonable empirically (see Figure \ref{fig:ab_mu}) and also because we expect CNV's to correspond to a change in mean level of at least $\log(3/2)$ and can be both positive or negative.  


\section{Inference} \label{sec:inference}

We now consider performing inference for the model described in Section \ref{sec:model}. Firstly a set of recursions to perform this task exactly are introduced and 
then an approximation is considered to make this procedure computationally more efficient.

\subsection{Exact On-line inference}
\label{sec:recursions}

We follow the method of \citet{FearnheadVasileiou2009} in developing a set of recursions for the posterior distribution of the hidden state, the location of the start of the current segment 
and its type, at time $t$ given that we have observed data upto time $t$,
$p( x_t | \mathbf{y}_{1:t}) = p(c_t , b_t | \mathbf{y}_{1:t} )$,
for $t \in \{ 1, 2 , \hdots , n \}$.
These  are known as the filtering distributions. Eventually we will be able to use these to simulate from the full posterior,
$p( x_{1:n} | \mathbf{y}_{1:n} )$.

To find these filtering distribution we develop a set of recursions that enable us to calculate 
$ p(c_{t+1} , b_{t+1} | \mathbf{y}_{(1:t+1)} )$ in terms of $p(c_t , b_t | \mathbf{y}_{1:t} )$. 
These recursions are analogous to the forward-backward equations widely used in analysing Hidden Markov models. 

There are two forms of these recursions 
depending on whether   $C_{t+1}=j$ for $j < t$ or $C_{t+1}=t$. We derive the two forms separately. Consider the first case.  For $j<t$ and $k\in\{N,A\}$,
\begin{eqnarray*}
\lefteqn{ p(C_{t+1} = j , B_{t+1} = k | \mathbf{y}_{1:(t+1)}) \propto p(\mathbf{y}_{t+1} | \mathbf{y}_{1:t} , C_{t+1}=j , B_{t+1} = k)p(C_{t+1} = j, B_{t+1} = k | \mathbf{y}_{1:t}  )}& &\\
&=&\left(\frac{  P_{k}(j+1,t+1)  }{  P_{k}(j+1,t)   }\right)\Pr(C_{t+1}=j , B_{t+1} = k | C_t = j , B_t = k)p(C_t = j , B_t = k | \mathbf{y}_{1:t}),
\end{eqnarray*}
where the first term in the last expression is the conditional likelihood from equation \eqref{eq:cond_like}. The second two terms use the fact that there has not been a new segment and hence $C_{t+1}=C_t$ and $B_{t+1}=B_t$.

Now for the second case, when $C_{t+1} = t$,
\begin{align*}
p(C_{t+1}&=t,B_{t+1}=k|\mathbf{y}_{1:t})  \\
&=\sum_{i=0}^{t-1} \sum_{l \in \{N,A\}} p(C_t=i,B_t=l|\mathbf{y}_{1:t})\Pr(C_{t+1} = t , B_{t+1} = k | C_t = i , B_t = l ).  
\end{align*}
Thus, as $p(\mathbf{y}_{t+1} | C_{t+1} = t, B_{t+1} = k, \mathbf{y}_{1:t}) = P_{k}(t+1,t+1)$, 
the filtering recursion is;
\begin{align*}
p( C_{t+1}&=t , B_{t+1}=k | \mathbf{y}_{1:(t+1)}) \propto \\
&P_{k}(t+1,t+1) \sum_{i=0}^{t-1} \sum_{l \in \{N,A\}} p(C_t=i, B_t=l|\mathbf{y}_{1:t}) \Pr(C_{t+1} = t , B_{t+1} = k | C_t = i , B_t = l ).
\end{align*}
These recursions are initialised by $p( C_1 = 0 , B_1 = k | \mathbf{y}_1 ) \propto \Pr(B_1=k) P_{k}(1,1)$ for $k \in \{ N , A \}$. 

\subsection{Approximate Inference} \label{sec:particle_approx}
The support of the filtering distribution $p(c_{t},b_{t}|\mathbf{y}_{1:t})$ has $2t$ points. Hence, calculating $p(c_{t},b_{t}|\mathbf{y}_{1:t})$ exactly is of order $t$ both in terms of computational and storage costs. 
The cost of calculating and storing the full set of filtering distributions $t=1,2,\hdots,n$ is thus of order $n^2$. For larger data sets this exact calculation 
can be prohibitive. 
A natural way to make this more efficient is to approximate each of the filtering distributions by distributions with a fewer number of support points. 
In practice such an approximation is feasible as many of the support points of each filtering distribution have negligible probability. If we removed these points then we could greatly increase the speed of our algorithm without sacrificing too much accuracy. 

We use the stratified rejection control (SRC) algorithm \citep{FearnheadLiu2007} to produce an approximation to the filtering distribution with potentially fewer support points at each time-point. 
This algorithm requires the choice of a threshold, $\alpha\geq 0$. At each iteration the SRC algorithm keeps all support points which have a probability greater than $\alpha$. 
For the remaining particles the probability of them being removed is proportional to their associated probability and the resampling is done in a stratified manner. 
This algorithm has good theoretical properties in terms of the error introduced at each resampling step, measured by the Kolmogorov Smirnov distance, being bounded by $\alpha$. 

\subsection{Simulation}
\label{sec:backwards_sim}
Having calculated and stored the filtering distributions, either exactly or approximately, simulating from the posterior is straightforward. This is performed by simulating the hidden process backwards in time. 
First we simulate $X_n=(C_n,B_n)$ from the final filtering distribution $p(c_n,b_n|\mathbf{y}_{1:n})$. Assume we simulate $C_n=t$. Then, by definition of the hidden process, we have $C_s=t$ and $B_s=B_n$ for $s=t+1,\ldots,n-1$, as
these time-points are all part of the same segment. Thus we next need to simulate $C_t$, from its conditional distribution given $C_{t+1}$, $B_{t+1}$ and $\mathbf{Y}_{1:n}$,
\begin{align*} 
   \begin{split}
p(c_t , b_t &| C_{t+1} = t , B_{t+1} , \mathbf{y}_{1:n} ) \\
& \propto p(c_t , b_t , C_{t+1} = t ,B_{t+1} , \mathbf{y}_{1:n}) \\
& = p(c_t,b_t)\Pr(C_{t+1}=t,B_{t+1}|C_t,B_t)p(\mathbf{y}_{1:n} | C_t,B_t,C_{t+1}=t,B_{t+1}) \\
& \propto p(c_t,b_t)\Pr(C_{t+1}=t,B_{t+1}|C_t,B_t)p(\mathbf{y}_{1:t} | C_t,B_t) \\
& \propto p(c_t,b_t| \mathbf{y}_{1:t})\Pr(C_{t+1}=t,B_{t+1}|C_t,B_t).
   \end{split}
\end{align*}
We then repeat this process, going backwards in time until we simulate $C_t=0$. From the simulated values we can extract the location and type of each segment.

\subsection{Hyper-parameters}
\label{sec:hyp_param}

As mentioned earlier in Section \ref{sec:model} the posterior of interest \eqref{eq:hidden_state_post} depends upon a vector of hyper-parameters which we now label as $\Psi$. 
Specifically $\Psi$ contains the parameters for the LOS distributions for the two differing types of segments which determine the cdf's $G_{N}(\cdot)$ and $G_{A}(\cdot)$. 

We use two approaches to estimating these hyper-parameters. The first is to maximise the marginal-likelihood for the hyper-parameters, which we can do using Monte Carlo EM (MCEM). 
For general details on MCEM see \citet{Levine2001}.
Although convergence of the hyper-parameters is quite rapid in the examples we look at in Section \ref{sec:results}, for very large data sets a cruder but faster alternative is to initially segment the data using a 
different method to ours and then use information from this segmentation to inform the choice of hyper-parameter values. 
The alternative method we use is the PASS method of \citet{rare_copy_number} and discussed in detail in Section \ref{sec:results}.    

\subsection{Estimating a Segmentation}
\label{sec:seg_est}
We have described how to calculate the posterior density $p(x_{1:n} | \mathbf{y}_{1:n})$ from which we can easily draw a large number of samples. However we often want to report a single estimated ``best'' segmentation
of the data.
We can define such a segmentation using
Bayesian decision theory \citep{decision_theory_berger}. 
This involves defining a loss function which determines the cost of us making a mistake in our estimate of the true quantity which we then seek to minimise. There are various choices of loss function
we could use \cite[see][]{loss_holmes}, but we use a loss that is a sum of a loss for estimating whether each location is abnormal or not.
If $L(\tilde{b}_t | b_t)$ gives the cost of making the decision that the state at time $t$ is $\tilde{b}_t$ when in fact it is $b_t$, then: 
\begin{align} \label{eq:assym_loss}
L(\tilde{b}_t | b_t)= \begin{cases} 
     1 & \textrm{ if $\tilde{b}_t=$ A and $b_t=$ N} \\
     \gamma & \textrm{ if $\tilde{b}_t=$ N and  $b_t=$ A} \\
		0 & \textrm{ otherwise} \\
   \end{cases} 
\end{align}
The inclusion of $\gamma$ allows us to vary the relative penalty for false positives as compared to false negatives. Under this loss
 we estimate $\hat{b}_t=N$ if $\pi(b_t = A) < 1/(1+\gamma) $ or $\hat{b}_t=A$ otherwise.


\section{Asymptotic Consistency} \label{asymptotics}

We will now consider the asymptotic properties of the method as $d$, the number of time-series, increases. Our aim is to study the robustness of inferences to the choice
of prior for the abnormal segments, and the estimate of $p_d$, allowing for abnormal segments that are rare.  We will assume that each time-series is of fixed length $n$. Following \cite{rare_copy_number}, 
to consider the influence of rare abnormal segments, we will let the proportion
of sequences that are abnormal in an abnormal segment to decrease as $d$ increases.

Our assumptions on how the data is generated is that there are a fixed number and location of abnormal segments. We will assume the model  of Section \ref{sec:cnv_example} with, without loss of generality, 
$\sigma^2=1$. So if $B_t=N$, then $Y_{i,j} \sim N(0,1)$. If
$(t,\ldots,s)$ is an abnormal segment then it has an associated mean, $\mu_0\neq 0$. For each $j=1,\ldots,d$, independently with probability $\alpha_d$, $Y_{i,j} \sim N(\mu,1)$ for $i=t,\ldots,s$; 
otherwise $Y_{i,j} \sim N(0,1)$ for $i=t,\ldots,s$. 

We fit the model of Section \ref{sec:model}, assuming the correct likelihood for data in normal and abnormal segments. For each abnormal segment we will have an independent prior for the associated mean, $\pi(\mu)$. 
Our assumptions on $\pi(\mu)$ is that its support is a subset of $\{[-b,-a],[a,b]\}$ for some $a>0$ and $b<\infty$, and it places non-zero probability on both positive and negative values of $\mu$. The model we
fit will assume a specified probability, $p_d$, of each sequences being abnormal within each abnormal segment. Note that we do not require $p_d=\alpha_d$, the true probability, but we do allow the choice of this
parameter to depend on $d$. 

The Lemmas used in the proof of the following two theorems can be found in the appendices.


\begin{theorem}
\label{thm:1}
Assume the model for the data and the constraints on the prior specified above. Let $\mathcal{E}$ be the event that there are no abnormal segments, and $\mathcal{E}^{c}$ its complement. 
If there are no abnormal segments and  $d\rightarrow \infty$, with $1/p_d = O(d^{\frac{1}{2}-\epsilon})$ 
for some $\epsilon>0$, then
\[
 \Pr(\mathcal{E}^{c}|\mathbf{y}_{1:n})\rightarrow 0,
\]
in probability.
\end{theorem}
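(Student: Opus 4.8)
The plan is to reduce the statement to a uniform bound on the segment likelihood ratios and then show that each such ratio vanishes in probability. First I would observe that, because the hidden-state model forces every normal segment to be followed by an abnormal one, the event $\mathcal{E}$ (no abnormal segments) corresponds to the single segmentation in which all of $\mathbf{y}_{1:n}$ is one normal segment. Writing $R(t,s)=P_A(t,s)/P_N(t,s)$ and using that $P_N$ factorises over time and dimension, the marginal likelihood of any segmentation equals $P_N(1,n)$ times $\prod_{\text{abnormal segments}}R(t,s)$. Hence $P_N(1,n)$ cancels in the posterior and
\[
\Pr(\mathcal{E}^c\mid\mathbf{y}_{1:n})\le \frac{1}{w_0}\sum_{s\in\mathcal{E}^c}w(s)\prod_{\mathrm{ab}(s)}R(t,s),
\]
where $w(s)$ is the prior probability of segmentation $s$ and $w_0>0$ is the (fixed, since $n$ is fixed) prior probability of the all-normal segmentation. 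As $n$ is fixed there are only finitely many segmentations and finitely many candidate segments $(t,s)$ with $1\le t\le s\le n$, so it suffices to show that $R_{\max}:=\max_{t\le s}R(t,s)\to0$ in probability: on the event $\{R_{\max}\le1\}$ each product over abnormal segments is bounded by $R_{\max}$ and the prior weights sum to at most one, giving $\Pr(\mathcal{E}^c\mid\mathbf{y}_{1:n})\le R_{\max}/w_0$.

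Next I would reduce each $R(t,s)$ to an explicit form. Dividing the integrand of \eqref{eq:abnormal_marg} by $P_N(t,s)$ and using $f_{\mathcal{P}}(y\mid\mu)/f_{\mathcal{D}}(y)=\exp(\mu y-\mu^2/2)$ gives
\[
R(t,s)=\int\prod_{k=1}^d\Big(p_d\,e^{\mu S_k-m\mu^2/2}+(1-p_d)\Big)\pi(\mu)\,\mathrm{d}\mu,
\]
where $m=s-t+1$ and $S_k=\sum_{i=t}^s Y_{i,k}$. Under the null each $S_k\sim N(0,m)$ independently, so each factor has mean one and hence $\mathrm{E}[R(t,s)]=1$ exactly. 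This is the crux of the difficulty: a first-moment (Markov) bound cannot show $R(t,s)\to0$, because the mean stays at one while the mass concentrates near zero, the expectation being propped up by rare events in which some $S_k$ is atypically large. The argument must therefore work on the logarithmic scale.

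For fixed $\mu$ in the support I would analyse $\sum_k\log g_k(\mu)$ with $g_k(\mu)=1+p_d(e^{Z_k}-1)$ and $Z_k=\mu S_k-m\mu^2/2$. A second-order expansion, using $\mathrm{E}[e^{Z_k}-1]=0$ and $\mathrm{E}[(e^{Z_k}-1)^2]=e^{m\mu^2}-1$, gives $\mathrm{E}[\log g_k(\mu)]\sim-\tfrac12 p_d^2(e^{m\mu^2}-1)$ and $\mathrm{Var}[\log g_k(\mu)]\sim p_d^2(e^{m\mu^2}-1)$, so the sum has mean of order $-d\,p_d^2(e^{m\mu^2}-1)$ and standard deviation of order $\sqrt d\,p_d\sqrt{e^{m\mu^2}-1}$. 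The hypothesis $1/p_d=O(d^{1/2-\epsilon})$ is exactly what makes $\sqrt d\,p_d\ge c\,d^\epsilon\to\infty$, so the negative mean dominates the fluctuations and Chebyshev's inequality yields $\prod_k g_k(\mu)\to0$ in probability for each fixed $\mu$. Because the support of $\pi$ lies in $\{a\le|\mu|\le b\}$, the factor $e^{m\mu^2}-1$ is bounded away from $0$ and $\infty$ uniformly over $\mu$ and over the finitely many segment lengths $m\le n$, which is what lets the pointwise statement be made uniform.

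The remaining, and hardest, step is to pass from this pointwise-in-$\mu$ convergence to convergence of the integral $R(t,s)$, since $\mathrm{E}[R]=1$ forbids naive dominated convergence. I would handle this by truncation: on the high-probability good event $G=\{\max_k|S_k|\le\tau\}$ with $\tau=\sqrt{2m(1+\eta)\log d}$ (so that a Gaussian tail and a union bound give $\Pr(G^c)=O(d^{-\eta})\to0$), the integrand is controlled uniformly in $\mu$, and $\mathrm{E}[\mathbf{1}_G\prod_k g_k(\mu)]$ factorises over $k$ into truncated one-dimensional expectations whose excess over the concentration value comes only from the exponentially small tilted tail $\Pr_\mu(|S_k|>\tau)$. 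Integrating this uniform bound against $\pi$ and invoking bounded convergence gives $R(t,s)\to0$ in probability, whence $R_{\max}\to0$ and then $\Pr(\mathcal{E}^c\mid\mathbf{y}_{1:n})\to0$ by the first paragraph. I expect this truncation and tail-control step, reconciling the in-probability vanishing with the unit mean, to be where the detailed lemmas in the appendix do their work.
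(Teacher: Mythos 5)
Your reduction to showing $R(t,s)=\int\prod_k\bigl[1+p_d(e^{Z_k(\mu)}-1)\bigr]\pi(\mu)\,\mathrm{d}\mu\to 0$ in probability for each of the finitely many candidate segments is exactly the paper's first step, and your diagnosis that $\mathbb{E}[R]=1$ forces a logarithmic-scale argument is correct. The gap is in your final step. Truncating on $G=\{\max_k|S_k|\le\tau\}$ with $\tau\sim\sqrt{2m(1+\eta)\log d}$ does not work: the truncated one-dimensional expectation is
\[
\mathbb{E}\bigl[\mathbf{1}_{|S_k|\le\tau}\,g_k(\mu)\bigr]=1-\Pr(|S_k|>\tau)-p_d\,\mathbb{E}\bigl[\mathbf{1}_{|S_k|>\tau}(e^{Z_k}-1)\bigr]=1-O\bigl(d^{-(1+\eta)+o(1)}\bigr),
\]
since $\mathbb{E}[e^{Z_k}\mathbf{1}_{S_k>\tau}]=\Pr\bigl(N(\mu m,m)>\tau\bigr)$ is itself $d^{-(1+\eta)+o(1)}$. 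Raising this to the power $d$ gives $1-o(1)$, not something tending to $0$. The reason is that the unit mean of $\prod_k g_k$ is \emph{not} propped up by any single $S_k$ being atypically large: the mean of $e^{Z_k}$ is carried by $S_k$ in an $O(1)$ window around $\mu m$, and the discrepancy between $\mathbb{E}[\prod_k g_k]=1$ and the typical value $e^{-\frac12 C_2(\mu)dp_d^2}$ comes from collective moderate deviations of the whole sum $S_d(\mu)$, of order $\sqrt{dp_d^2}$ standard deviations, which your per-coordinate truncation does not remove. So the truncated first-moment bound proves nothing.

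Relatedly, your route to uniformity in $\mu$ is too thin: bounding $e^{m\mu^2}-1$ away from $0$ and $\infty$ controls the mean and variance of $S_d(\mu)$ uniformly, but does not control $\sup_{\mu\in[a,b]}S_d(\mu)$, which is what you need to bound $\int e^{S_d(\mu)}\pi(\mu)\,\mathrm{d}\mu\le e^{\sup_\mu S_d(\mu)}$. The paper does real work here: it lays down a grid of $M_d=O(1/p_d)$ values of $\mu$, proves the deterministic bound $S_d''(\mu)\ge -d(s-t+1)$ (Lemma \ref{second_derivative}) to interpolate the bound between grid points, and replaces Chebyshev by a $2k$-th moment Markov inequality (Lemmas \ref{norm_seg_bound} and \ref{prob_bound}) precisely because the plain Chebyshev failure probability $O(1/(dp_d^2))$ summed over $M_d$ grid points gives $O(1/(dp_d^3))$, which need not vanish under $1/p_d=O(d^{1/2-\epsilon})$; choosing $k>1/(4\epsilon)$ rescues the union bound. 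You would need to supply some such chaining or smoothness argument, together with a tail bound strong enough to survive the union over the grid, to complete the proof.
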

\begin{proof}
As $n$ is fixed, we have a fixed number of possible segmentations. We will show that the posterior probabilility of each possible segmentation with at least one abnormal segment is $o_p(1)$ as $d\rightarrow \infty$. 

For time-series $k$ let $P_{N,k}(t,s)$ denote the likelihood of the data $y_{t,k},\ldots,y_{s,k}$ assuming this is a normal segment; 
and let $P_{A,k}(t,s;\mu)$ be the marginal likelihood of the same data given that it is drawn
from independent Gaussian distributions with mean $\mu$. Then if we have a segmentation with $m$ abnormal segments, with the $i$th abnormal segment from $t_i$ to $s_i$,
the ratio of the posterior probability of this segmentation to the posterior
probability of $\mathcal{E}$ is
\[
K \prod_{i=1}^m \left( \int \left\{ \prod_{k=1}^d \frac{P_{A,k}(t_m,s_m;\mu)}{P_{N,k}(t_m,s_m)}
\right\} \pi(\mu) \mbox{d}\mu \right),
\]
where $K$ is the ratio of the prior probabilities of these two segmentations. So it is sufficient to show that for all $t\leq s$, 
\begin{align}
\label{eq:first_int}
  \int \left\{ \prod_{k=1}^d \frac{P_{A,k}(t,s;\mu)}{P_{N,k}(t,s)}
\right\} \pi(\mu) \mbox{d}\mu \rightarrow 0
\end{align}
in probability as $d\rightarrow \infty$.
 
Our limit involves treating the data as random. Each term in this product is then random, and of the form 
\begin{equation} \label{eq:P1}
 \frac{P_{A,k}(t,s;\mu)}{P_{N,k}(t,s)}=1+p_d \left( \exp\left\{\mu\sum_{u=t}^s \left( Y_{k,u}- \frac{\mu}{2} \right) \right\}-1 \right).
\end{equation} 
By applying Lemma \ref{case_1} separately to positive and negative values of $\mu$, 
we have that this tends to 0 with probability 1 as $d\rightarrow \infty$. This is true for all possible segmentations with at least one abnormal segments. As $n$ is fixed, there are a finite 
number of such segments, so the result follows.

\end{proof}

Theorem \ref{thm:2} tells us that the posterior probability of misclassifying a time point as normal when it is abnormal tends to zero as more time-series are observed. 
\begin{theorem}
\label{thm:2}
Assume the model for the data and the constraints on the prior specified above.
 Fix any position $t$, and consider the limit as $d\rightarrow \infty$, with $d p_d^2\rightarrow \infty$ and either 
 \begin{itemize}
  \item[(i)] $p_d = o(\alpha_d)$; or
  \item[(ii)] if $\mu_0$ is the mean associated with the abnormal sequences at position $t$, then there exists a region $A$ such that the prior probability associated with $\mu \in A$ is non-zero, and for all $\mu\in A$ 
  and for sufficiently large $d$ 
  \[
   \alpha_d \left( e^{\mu\mu_0} - 1 \right) - \frac{p_d}{2} \left( e^{\mu^2} - 1 \right) > 0.
  \]
 \end{itemize}
 Then if $B_t=A$ 
  \[
  \Pr(B_t=N|\mathbf{y}_{1:n}) \rightarrow 0. 
 \]
 in probability.
\end{theorem}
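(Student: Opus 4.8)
The plan is to pass to the segmentation representation of the posterior and reduce the whole statement to a single multiplicative factor. By the conditional independence across segments, if we divide every segment marginal likelihood by its ``all normal'' value $\prod_k\prod_i f_{\mathcal D}(y_{i,k})$, the unnormalised posterior weight of a segmentation $S$ becomes $w(S)=\Pr(S)\prod_{[u,v]\in\mathcal A(S)}R(u,v)$, where $\mathcal A(S)$ is its set of abnormal segments and $R(u,v)=P_A(u,v)/P_N(u,v)$. Since $n$ is fixed there are finitely many segmentations, so it suffices to bound $\Pr(B_t=N\mid\mathbf y_{1:n})$ by $\big(\sum_{S:b_t=N}w(S)\big)/\big(\sum_{S:b_t=A}w(S)\big)$ and show the ratio is $o_p(1)$. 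First I would introduce the injection $\phi$ sending a segmentation $S$ with $t$ in a normal segment to the segmentation $\phi(S)$ obtained by carving out $\{t\}$ as its own length-one abnormal segment (a valid, positive prior probability configuration under the renewal model, and injective with $b_t=A$). The factorisation gives $w(\phi(S))/w(S)=\big(\Pr(\phi(S))/\Pr(S)\big)R(t,t)$, and since the prior ratio does not depend on $d$ it is bounded below by a constant $c_0>0$ over the finitely many $S$. Summing across the injection yields $\sum_{S:b_t=N}w(S)\le (c_0R(t,t))^{-1}\sum_{S:b_t=A}w(S)$, hence $\Pr(B_t=N\mid\mathbf y_{1:n})\le (c_0R(t,t))^{-1}$, and it remains only to prove $R(t,t)\to\infty$ in probability.

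Next I would analyse $R(t,t)=\int\prod_{k=1}^d\big[1+p_d(e^{Z_k(\mu)}-1)\big]\pi(\mu)\,\mathrm d\mu$, the single-point version of \eqref{eq:P1}, with $Z_k(\mu)=\mu Y_{t,k}-\mu^2/2$. The moment identities driving the argument are $E[e^{Z_k(\mu)}]=e^{\mu\mu_0}$ for the truly abnormal coordinates and $E[e^{Z_k(\mu)}]=1$, $E[e^{2Z_k(\mu)}]=e^{\mu^2}$ for the normal ones. Expanding $\log(1+p_d(e^{Z_k}-1))=p_d(e^{Z_k}-1)-\tfrac{p_d^2}{2}(e^{Z_k}-1)^2+\cdots$ and summing over the $\approx\alpha_d d$ abnormal and $\approx d$ normal coordinates gives $E\big[\log\prod_k(1+p_d(e^{Z_k(\mu)}-1))\big]\approx dp_d\big[\alpha_d(e^{\mu\mu_0}-1)-\tfrac{p_d}{2}(e^{\mu^2}-1)\big]$, which is exactly the bracket in condition (ii). Under (ii) this bracket is positive on $A$; under (i), taking $A$ to be the part of the support of $\pi$ with the same sign as $\mu_0$ (non-empty since $\pi$ charges both signs) makes $e^{\mu\mu_0}-1>0$ while $p_d=o(\alpha_d)$ renders the negative term lower order, so (i) reduces to (ii). Restricting the integral to $\mu\in A$ and using $dp_d^2\to\infty$, the mean of the log integrand tends to $+\infty$, and combined with concentration of the sum of independent coordinatewise terms this gives $\prod_k(1+p_d(e^{Z_k(\mu)}-1))\to\infty$ in probability, whence $R(t,t)\ge\int_A\prod_k(\cdots)\pi(\mu)\,\mathrm d\mu\to\infty$.

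The main obstacle is this last step: upgrading ``the mean of the log diverges'' to genuine convergence in probability of $R(t,t)$. The variables $e^{Z_k(\mu)}$ are log-normal, so $p_d(e^{Z_k}-1)$ is not uniformly small and the Taylor remainder is controlled only after truncating rare large values; the variance of $\sum_k\log W_k$ is governed by these tails, so showing that the positive drift $dp_d\times(\text{bracket})$ dominates the fluctuations is the delicate point. This is precisely the role of the appendix lemmas (the same machinery underlying Theorem \ref{thm:1}): a truncation argument to tame the tails, followed by a second-moment concentration bound for the coordinatewise sum. A secondary technicality is passing from convergence in probability of the integrand at each fixed $\mu$ to divergence of the integral; I would handle this by fixing a compact subinterval $A'\subseteq A$ with $\pi(A')>0$ and establishing the bound uniformly in $\mu\in A'$ through continuity of $\mu\mapsto Z_k(\mu)$ and a covering argument, so that $R(t,t)\ge\pi(A')\inf_{\mu\in A'}\prod_k(\cdots)\to\infty$.
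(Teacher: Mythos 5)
Your proposal is correct and follows essentially the same route as the paper: compare each segmentation having $B_t=N$ with the one obtained by inserting a length-one abnormal segment at $t$, and show that the resulting factor $\int\prod_{k=1}^d\left[1+p_d(X_k(\mu)-1)\right]\pi(\mu)\,\mbox{d}\mu$ tends to infinity in probability, with condition (i) reduced to (ii) by restricting to $\mu$ of the same sign as $\mu_0$. The only difference is in the final step, where the paper's Lemma \ref{case_2} applies Jensen's inequality to pull the logarithm inside the integral over $\mu$ and then runs a single Chebyshev bound on $T_d=\int_A S_d(\mu)\pi(\mu)\,\mbox{d}\mu$, using the elementary inequalities $\log(1+x)\ge x-x^2/2$ and $|\log(1+x)|\le |x|+x^2/2$ together with finiteness of all log-normal moments, which sidesteps both the truncation and the uniform-in-$\mu$ covering argument you anticipate needing.
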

\begin{proof}
We will show that each segmentation with $B_t=N$ has posterior probability that tends to 0 in probability as $d\rightarrow \infty$. For each segmentation with $B_t=N$ we will compare its posterior probability 
with one which is identical except for the addition of an abnormal segmentant, of length 1, at location $t$. The ratio of posterior probabilities of these two segmentations will be
\[
 K  \left( \int \left\{ \prod_{k=1}^d \frac{P_{A,k}(t,t;\mu)}{P_{N,k}(t,t)}
\right\} \pi(\mu) \mbox{d}\mu \right),
\]
where $K$ is a constant that depends on the prior for the segmentations. We require that this ratio tends to infinity in probability as $d\rightarrow \infty$. Under both conditions (i) and (ii) above this follows immediately
from Lemma \ref{case_2}. For case (i) we are using the fact that the prior places positive probability both on $\mu$ being positive and negative, and for $\mu$ the same sign as $\mu_0$ we have that 
$e^{\mu\mu_0}>1$.
\end{proof}

This result shows some robustness of the Bayesian approach to the choice of prior. 
Consider a prior on the mean for an abnormal segment that has strictly positive density for values in $\{[-b,-a],[a,b]\}$ for $a>0$. 
Then for any true mean, $\mu_0$ with 
$|\mu_0|\geq a$, we will consistently estimate the segment as abnormal provided the assumed or estimated
probability of a sequence being abnormal is less than twice the true value. 
Thus we want to choose $a$ to be the smallest absolute value of the mean of an abnormal segment we expect or wish to detect. The choice of $b$ is less important, in that it does not affect the asymptotic consistency
implied by the above theorem. 

Furthermore we do not need to specify $p_d$ exactly for consistency -- the key is not to over-estimate the true proportion of abnormal segments by more than a factor of two. We could set 
$p_d=Kd^{-1/2+\epsilon}$ for some constants $K, \epsilon>0$ and ensure that asymptotically we will consistently estimate the absence of abnormal segments (Theorem \ref{thm:1}) and the location of any abnormal segments
(Theorem \ref{thm:2}) the true proportion of abnormal segments decays at a rate that is slower than $d^{-1/2+\epsilon}$.

\section{Results}
\label{sec:results}

We call the method introduced in Sections \ref{sec:model} and \ref{sec:inference} BARD: Bayesian Abnormal Region Detector. 
We now evaluate BARD on both simulated and real CNV data. Our aim is to both investigate its robustness to different types of model mis-specification, and to compare its performance 
with a recently proposed method for analysing such CNV data.

The simulation studies we present are based on the concrete example in Section \ref{sec:cnv_example}, namely the change in mean model for Normally distributed data. For inference
we assume that the LOS distributions, $S_N$ and $S_A$, to be Negative binomial and the prior probability of a particular dimension $k$  being abnormal $p_k$ as the same for all $k = \{1,2, \hdots , d\}$.
For all the simulation studies we present we used MCEM 
on a single replicate of the simulated data set to get estimates for the hyper-parameters for the LOS distribution, but 
fixed $p_k$. 
Data for normal segments are IID standard Gaussian, and for abnormal segments data from
dimensions that are abnormal are Gaussian with variance 1 but mean $\mu$ drawn from some prior $\pi(\mu)$. Below we consider the
effect of varying the choice of prior used for simulating the data and that assumed within BARD. 
In implementing BARD we used the SRC method of resampling described in Section \ref{sec:particle_approx} with a value of $\alpha=10^{-4}$, we found this value of $\alpha$ gave a good trade off between accuracy and computational cost.


To get an explicit segmentation from BARD we use the asymmetric loss function \eqref{eq:assym_loss} with a value of $\gamma=1/3$.

As a benchmark for comparison we also analyse all data sets using the Proportion Adaptive Segment Selection procedure (PASS) from \citet{rare_copy_number}. This was implemented using an R package called {\texttt{PASS}} 
which we obtained from the authors website.   
At its most basic level the PASS method involves evaluating a test statistic for different segments of the data. After these evaluations the values of the statistic 
that exceed a certain pre-specified threshold are said to be significant and the segments that correspond to these values are the identified abnormal segments. This threshold is typically found by simulating data sets with no abnormal segments and then choosing the threshold which gives a desired type 1 error, here we take this error to be 0.05 in the simulation studies. The PASS
algorithm considers all segments that are shorter than a pre-defined length. To avoid excessive computational costs this length should be as small as possible, but at least as large as the longest abnormal segment we wish to detect (or believe exists in the data).  

We found that a run of PASS was about twice as fast as one run of BARD. In order to estimate the hyper-parameters using MCEM 
took between 5 and 20 runs of BARD.


\emph{Evaluating a segmentation}

To form a comparison between the two methods we must have some way of evaluating the quality of a particular segmentation with respect to the ground truth. We consider the three most important criteria to be the number of true and false positives and the accuracy in detecting the true positives.

We define a segment to be correctly identified or a true positive if it intersects with the true segment. With this definition in mind then finding the true/false positives is simple. To define the accuracy of an estimated segment compared to the truth it is most intuitive to measure the amount of ``overlap'' of the segments, this is captured by the dissimilarity measure $D_k$ \eqref{eq:diss_measure} defined in \citet{rare_copy_number}.

Let $\hat{ \mathbb{I}}$ be the collection of estimated intervals, the accuracy of estimating the $k^{th}$ true segment $I_k$ is given by $D_k$
\begin{align}
\label{eq:diss_measure}
  D_k = \min_{ \hat{I}_j \in \hat{\mathbb{I}}  } \left\{ 1 - \frac{ \lvert \hat{I}_j \cap I_k  \rvert }{ \sqrt{ \lvert  \hat{I}_j \rvert \lvert I_k \rvert } }  \right\}  
\end{align}
$D_k \in [0,1]$, if $D_k=0$ then an estimated interval overlaps exactly with segment $I_k$ however if $D_k =  1$ then no estimated intervals overlap with the $k^{th}$ segment, i.e.\ it hasn't been detected. Smaller values of $D$ indicate a greater overlap.

\subsection{Simulated Data from the Model}
\label{sec:sim_study}
Firstly we analysed data simulated from the model assumed by BARD. A 
soft maximum on the length of the simulated data of $n=1000$ was imposed and the number of dimensions fixed at $d=200$. 
The LOS distributions were
\begin{align*}
S_N \sim \text{NBinom}(10,0.1) \text{ and } S_A \sim \text{NBinom}(15,0.3).  
\end{align*}
Two different distributions were used to generate the altered means for the affected dimensions and we also varied $\pi_N$ 
(see Table \ref{table:res}), and for each scenario
we implemented the Bayesian method with the correct prior for the abnormal mean, and the correct chocie of $\pi_N$.
The number of affected dimensions for each abnormal segment was fixed at 4\% and we fixed $p_k$ to this value. For each scenario we
considered we generated 200 data sets.
\begin{table}[h!] 
\centering 
\begin{tabular}{l c c ccc} 
\hline\hline 
 $\mu$ & $\pi_N$ & Method & Proportion detected & Accuracy & False positives
\\ [0.5ex] 
\hline 
 &     & PASS  &  (0.66,0.70) & (0.11,0.13) & (0.68,0.93)   \\[-1ex] 
\raisebox{1.5ex}{$U(0.3,0.7)$} & \raisebox{1.5ex}{0.5} & BARD
 & (0.87,0.89) & (0.070,0.084) & (0.04,0.12)     \\[1ex] 

 &     & PASS  &  (0.66,0.71) & (0.12,0.14) & (0.90,1.19)    \\[-1ex] 
 & \raisebox{1.5ex}{0.8} & BARD
  & (0.75,0.78) & (0.081,0.093)  & (0.03,0.09)     \\[1ex] 

 &     & PASS  & (0.91,0.93)  & (0.070,0.077)  &  (0.93,1.22)   \\[-1ex] 
\raisebox{1.5ex}{$U(0.5,0.9)$} & \raisebox{1.5ex}{0.5} & BARD
 & (0.98,0.99) & (0.035,0.042) & (0.01,0.06)     \\[1ex] 

 &     & PASS  &  (0.93,0.95) & (0.069,0.076) & (0.88,1.17)    \\[-1ex] 
 & \raisebox{1.5ex}{0.8} & BARD
  & (0.95,0.97) & (0.040,0.045) & (0.00,0.04)     \\[1ex] 
\hline
\end{tabular}
\caption{Scenarios differed in the prior for $\mu$ and the value of $\pi_N$ used to simulate the data. In BARD these same priors were used for the analysis of the data. The results are based on 200 simulated data sets for each scenario and the intervals given are 95\% confidence intervals calculated using 1000 bootstrap replicates.}
\label{table:res}  
\end{table}
                                             
Results summarising the accuracy of the segmentations obtained by the two methods are shown in Table \ref{table:res}. 
BARD performed substantially better than PASS here especially with regards to the number of false positives each method found, though
this is in part because all the modelling assumptions within BARD are correct for these simulated data sets.
It is worth noting that both methods do much better when $\mu \sim U(0.5, 0.9)$ due to the stronger signal present.

\begin{table}[h!] 
\centering 
\begin{tabular}{c c c c} 
\hline\hline 
 $p_k$ & Proportion detected & Accuracy & False positives \\ [0.5ex] 
\hline
$\frac{1}{200}$ & (0.63,0.67) & (0.086,0.10) & (0.005,0.06) \\[1ex]
$\frac{4}{200}$ & (0.74,0.78) & (0.086,0.10) & (0.05,0.12) \\[1ex]
$\frac{8}{200}$ & (0.75,0.78) & (0.081,0.093) & (0.03,0.09) \\[1ex]
$\frac{12}{200}$ & (0.74,0.78) & (0.083,0.096) & (0.03,0.095) \\[1ex]
$\frac{16}{200}$ & (0.72,0.76) & (0.084,0.098) & (0.03,0.09) \\[1ex]
$\frac{20}{200}$ & (0.70,0.74) & (0.088,0.102) & (0.02,0.08)  \\[1ex]
\hline 
\end{tabular} 
\caption{The robustness of BARD under a misspecification of $p_k$ taking the prior as $\mu \sim U(0.3,0.7)$ and $\pi_N =0.8$ 
with the true value of $p_k$ being 4\%. Values of $p_k$ were varied between 0.5\% and 10\% and we simulated 200 data sets for each $p_k$.}
\label{table:robust_res}  
\end{table}

We next investigated how robust the results were to our choice for $p_k$. We just consider $\mu\sim U(0.3,0.7)$ and $\pi_N=0.8$ and
we vary our choice of $p_k$ from $0.5\%$ to $10\%$. These results are in table \ref{table:robust_res}. Whilst, as expected,
if we take $p_k$ to be the true value for the data we get the best segmentation, the results are clearly robust to mis-specification of
$p_k$. In all cases we still achieve much higher accuracy and fewer false positives than PASS. Apart from
the choice $p_k=1/200$ we also have a higher proportion of correctly detected CNVs than PASS.

We also investigated the robustness to mis-specification of the model for the LOS distribution, and for the distribution of the mean of the abnormal segments. 
We fixed the position of five abnormal segments at the following time points 200, 300, 500, 600 and 750. Additionally the segments at 200 and 750 were followed by another abnormal segment. 
Thus we have seven abnormal segments in total. The true LOS distribution for the abnormal segments are in fact Poisson with intensity randomly chosen from the set $\{ 20,25,30,35,40 \}$. 
For these abnormal segments the mean value that affected the dimensions was drawn from a Normal distribution with differing means and a fixed variance shown in Table \ref{table:fair_res}. 
The number of affected dimensions for each of the abnormal segments was also varied randomly from 3-6\% of the total number of dimensions $(d=200)$. For inference, we fixed $p_k$ to 4\% for all $k$ and 
we set the prior for the abnormal mean to be uniform on $(-0.7,-0.3) \cup (0.3,0.7)$. Our model for the LOS distribution were negative binomials, with MCEM used to estimate the
hyper-parameters of these distributions.
\begin{table}[h!] 
\centering 
\begin{tabular}{c  c ccc} 
\hline\hline 
 $\mu$ & Method & Proportion detected & Accuracy & False positives
\\ [0.5ex] 
\hline 
 
 &  PASS  &  (0.78,0.82) & (0.056,0.068) & (1.15,1.41)   \\[-1ex] 
\raisebox{1.5ex}{$N(0.8,0.4^2)$}
 & BARD  & (0.82,0.86) & (0.048,0.059) & (0.02,0.07) \\[1ex] 
 &  PASS  &  (0.74,0.78) & (0.069,0.084) & (1.05,1.33)   \\[-1ex] 
\raisebox{1.5ex}{$N(0.7,0.4^2)$}
 & BARD  & (0.78,0.82) & (0.060,0.073) & (0.01,0.07) \\[1ex] 
 &  PASS  &  (0.66,0.71) & (0.079,0.095) & (1.08,1.37)   \\[-1ex] 
\raisebox{1.5ex}{$N(0.6,0.4^2)$}
 & BARD  & (0.70,0.75) & (0.061,0.072) & (0.03,0.09) \\[1ex] 
 &  PASS  &  (0.60,0.65) & (0.089,0.11) & (1.06,1.37)   \\[-1ex] 
\raisebox{1.5ex}{$N(0.5,0.4^2)$}
 & BARD  & (0.62,0.68) & (0.075,0.093) & (0.02,0.08) \\[1ex] 
 &  PASS  &  (0.51,0.56) & (0.10,0.13) & (0.92,1.22)   \\[-1ex] 
\raisebox{1.5ex}{$N(0.4,0.4^2)$}
 &BARD  & (0.55,0.61) & (0.084,0.10) & (0.03,0.10) \\[1ex] 

\hline 
\end{tabular} 
\caption{Results based on 200 simulated data sets as we vary the distribution from which $\mu$ was simulated from but keeping the prior $\pi(\mu)$ in BARD uniform. 95\% confidence intervals for the means were calculated using 1000 bootstrap replicates.}
\label{table:fair_res}  
\end{table}

From Table \ref{table:fair_res} it can be seen that BARD still outperforms PASS especially in regards to accuracy and the number of false positives. The performance of BARD also shows that it is robust to a misspecification of both the LOS distributions and the 
distribution from which $\mu$ was drawn from as we kept the prior in BARD the same. The performance of both methods was impacted by the decreasing mean of the Normal distributions from which $\mu$ was drawn as more of them became close to zero and thus abnormal segments became indistinguishable from normal segments.


\subsection{Simulated CNV Data}
\label{sec:realistic_sim}

We now make use of the CNV data presented in the Section \ref{sec:intro}, to obtain a more realistic model to simulate data from. 
We used the PASS method to initially segment one replicate of the data, and then analysed this segmentation to obtain information about the LOS distributions 
and the distributions that generate the data in both normal and abnormal segments.

In Figure \ref{fig:realistic_sims} we plot some of the empirical data from the segmentation given by PASS. 
To simulate data sets we either fitted distributions to these quantities or sampled from their empirical distributions. 
Firstly if we consider the two LOS distributions then for normal segments, see Figure \ref{fig:n_los}, 
we found that a geometric distribution fitted the data well. For the abnormal LOS distribution we took a discrete uniform distribution on $\{1,2 , \hdots , 200\}$. 
This was partly due to us having specified a maximum abnormal segment length of 200 in the PASS method but is potentially realistic in practice as abnormal segments longer than 200 time points are unlikely to occur. 
To support this choice we plot the empirical cdf of the ordered data and a straight line which are the quantiles of the uniform distribution we propose. 
We can see that although the fit is not perfect, this is probably due to the small sample size.

Now consider the distributions that generate the actual observations, we can think of these in two parts, one of them being a distribution 
for the ``noise'' in normal segments 
(Figure \ref{fig:n_noise}) and then the mean shift parameter for the abnormal segments (Figure \ref{fig:ab_mu}). 
Up until now we have taken this noise distribution to be standard Normal, however the data suggests that in reality it has heavier tails than the Normal distribution. 
We found that a $t$-distribution with 15 degrees of freedom was a better fit to the data so we simulated from this for the noise distribution. For the mean shift parameter 
$\mu$ we took abnormal segments found by the PASS method and looked at the means of each of the dimensions and took the affected dimensions only, this gave the histogram in Figure \ref{fig:ab_mu}.
In the study we simulated $\mu$ from this empirical distribution.        

\begin{figure}[h!]
        \centering
        \begin{subfigure}[b]{0.45\textwidth}
                \includegraphics[scale=0.45]{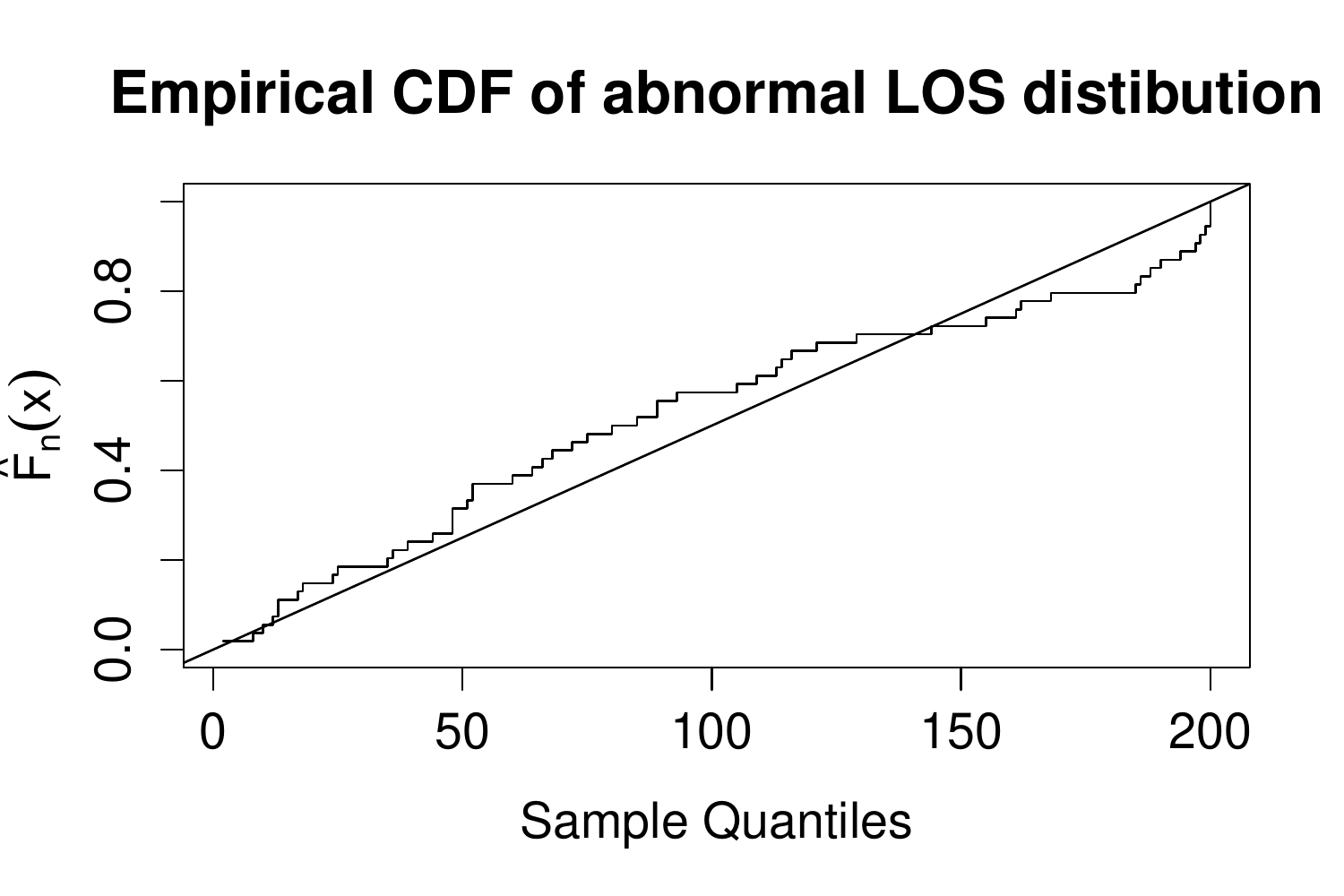}
                 \caption{}
                \label{fig:ab_los}
        \end{subfigure}
        \begin{subfigure}[b]{0.45\textwidth}
                \includegraphics[scale=0.45]{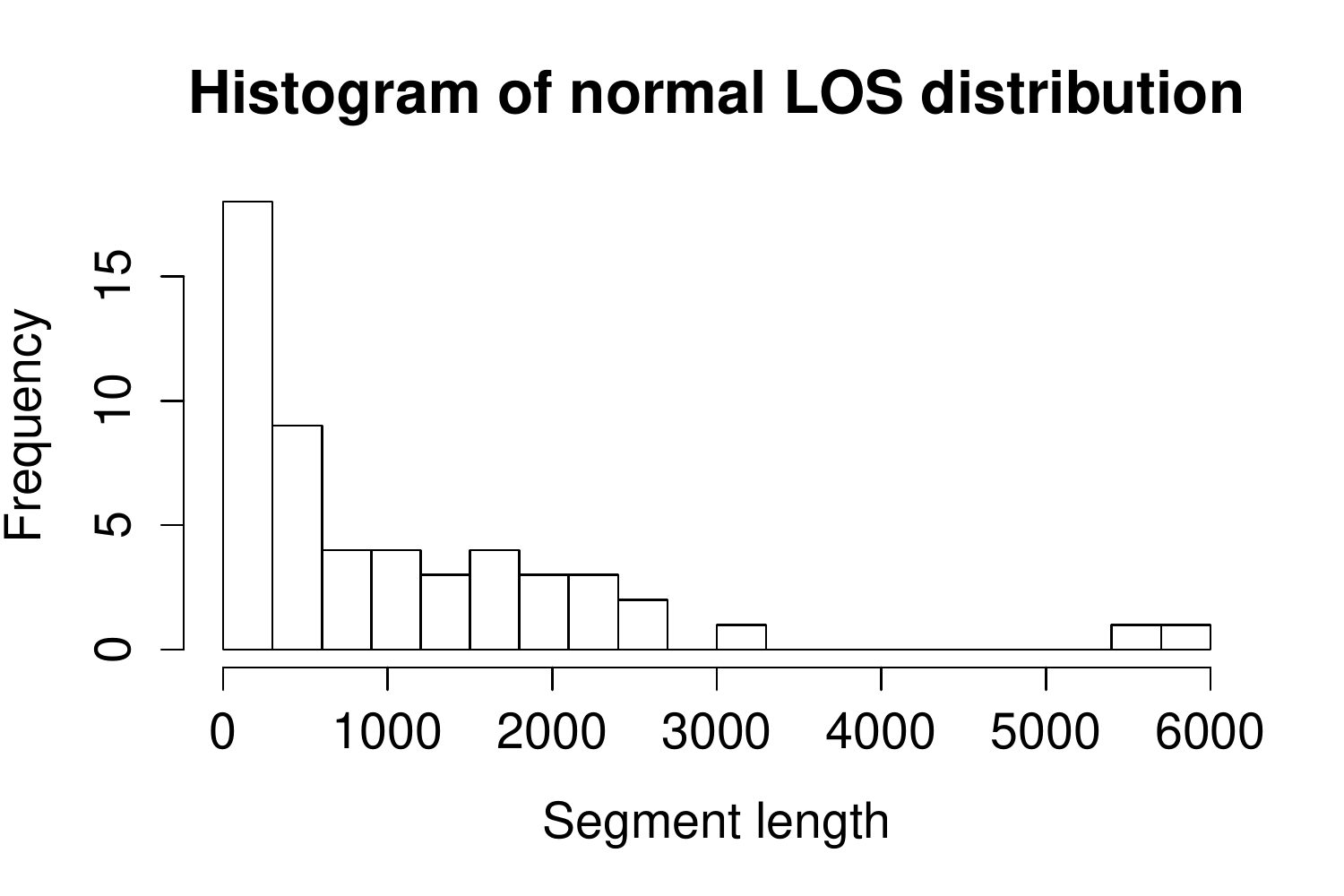}
                \caption{}
                \label{fig:n_los}
        \end{subfigure}
  
        \begin{subfigure}[b]{0.45\textwidth}
                \includegraphics[scale=0.45]{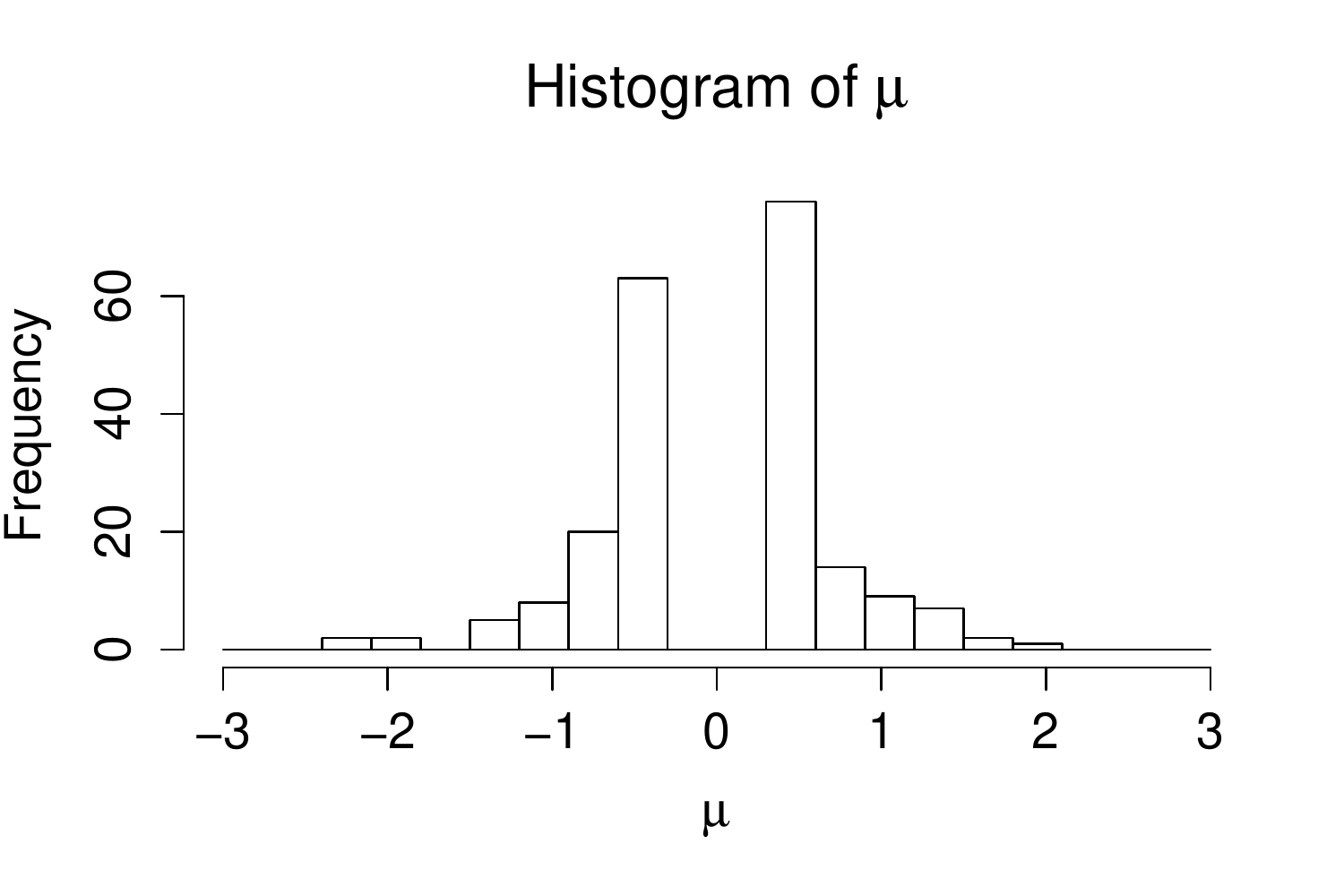}
                 \caption{}
                \label{fig:ab_mu}
        \end{subfigure}
  \begin{subfigure}[b]{0.45\textwidth}
                \includegraphics[scale=0.45]{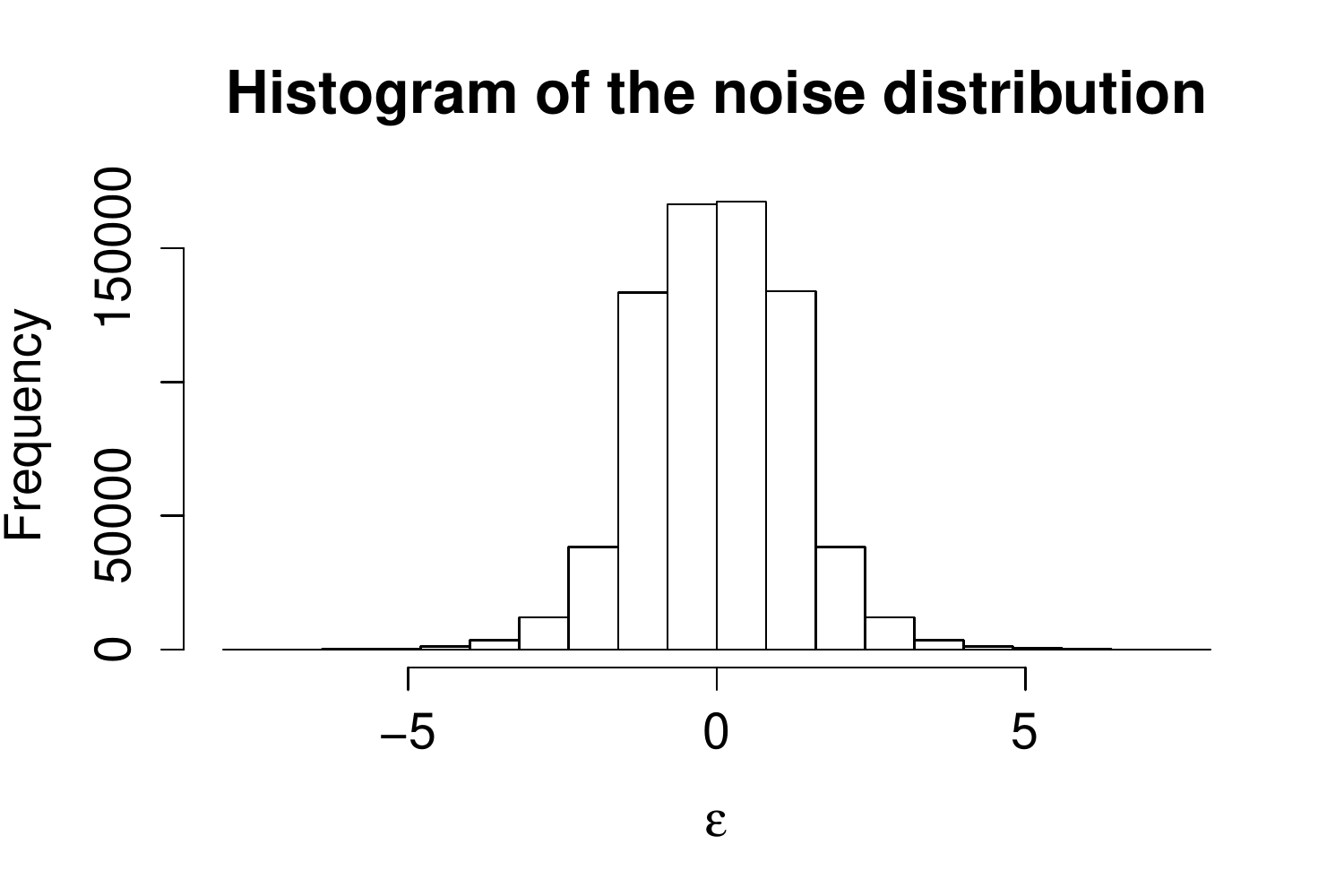}
                 \caption{}
                \label{fig:n_noise}
        \end{subfigure}
        \caption{Empirical distribution of features of the optimal segmentation of CNV data obtained using the PASS method. (a) QQ-plot of length (measured in number of observations) 
        of abnormal segments  against a Uniform distribution on $\{1,2,\ldots,200\}$;
        (b) histogram of length (measured in number of observations) of normal segments; (c) histogram of estimated mean for abnormal segments; and 
        (d) histogram of residuals.}
\label{fig:realistic_sims}
\end{figure}

Each simulated data set has length of approximately $n=20,000$ and dimension $d=50$. We also varied the proportion of affected dimensions between 4\% and 6\%. 
We simulated 40 of these data sets for each of the two scenarios and used both methods to segment them, results are given in Table \ref{table:realistic_res}.

\begin{table}[h!] 
\centering 
\begin{tabular}{c  c ccc} 
\hline\hline 
 \% of affected dimensions & Method & Proportion detected & Accuracy & False positives
\\ [0.5ex] 
\hline 
 
 &  PASS  &  (0.59,0.66) & (0.080,0.10) & (7.15,9.23)   \\[-1ex] 
\raisebox{1.5ex}{4\%}
 & BARD  & (0.61,0.69) & (0.055,0.072) & (0.08,0.38) \\[1ex] 
 &  PASS  &  (0.64,0.72) & (0.066,0.085) & (2.25,3.03)   \\[-1ex] 
\raisebox{1.5ex}{6\%}
 & BARD  & (0.71,0.78) & (0.046,0.060) & (0.10,0.43) \\[1ex]
\hline 
\end{tabular} 
\caption{Results based on 40 simulated data sets for two scenarios where the proportion of dimensions affected for each abnormal segment varied between 4\% and 6\% (of the total number of dimensions $d=50$). 95\% confidence intervals for the means were calculated using 1000 bootstrap replicates.}
\label{table:realistic_res}  
\end{table}

We can see that the proportion of correct segments identified is decreased in both methods, this is most likely due to the non-Normally distributed noise present.
However the two methods report a very different number of false positives. The performance of BARD is encouraging as it gives many fewer false positives than 
PASS even with heavier tailed observations than the standard Gaussian case. 

BARD also allows us to get an estimate of the uncertainty in the position of abnormal segments as from the posterior we can get the probability of each time point belonging to an abnormal segment. If we bin these probabilities into intervals and then find the proportion of these points that are actually abnormal we can obtain a calibration plot Figure \ref{fig:cal}. We can see from this that the model seems to be well calibrated. 

\begin{figure}[h!] 
\centering
\includegraphics[scale=0.25]{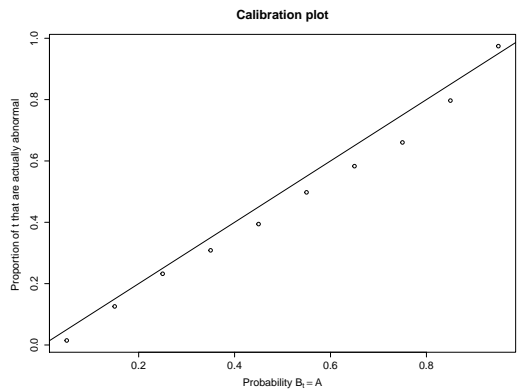}
\caption{All the time points $t$ for which the posterior probability lies in a certain interval plotted against the proportion of times $t$ lies in an abnormal segment.}
\label{fig:cal}
\end{figure}
\subsection{Analysis of CNV Data}
\label{sec:real_data}


We now apply our method to CNV data from \citet{Pinto}, a subset of which was presented in Section \ref{sec:intro} and was used to construct a model for the simulated data in Section \ref{sec:realistic_sim}. 

\citet{Pinto} undertook a detailed study of the different technologies (platforms) used to obtain the measurements and many of the algorithms currently used to call CNV's.
We chose to analyse data from the Nimblegen 2.1M platform and from chromosomes 6 and 16. For both chromosomes we have three replicate data sets, each consisting of measurements from from six genomes. 
We preprocessed the data to remove experimental artifacts, using the method described in \citet{siegmund2011},  before analysing it.
The data from chromosome 16 consisted of 59,590 measurements, and the data from chromosome 6 consisted of 126,695 measurements, for each genome.

Firstly we ran the PASS method on just the first replicate of the data from chromosome 16 and found the most significant segments.
Doing this enables us to get an estimate of the parameters for the LOS distributions to use in the 
Bayesian method without having to do any parameter inference. 
The maximum length of segment we searched over was 200 (measured in observations not base pairs) as this is greater than the largest CNV we would expect to find. This gave parameters that suggested a geometric
distribution for the length of normal segments $S_N \sim \mathrm{Geom}(0.0007)$ and the following Negative Binomial distribution for abnormal segments  $S_N \sim \mathrm{NBinom}(2,0.1)$. We used the same split uniform prior for $\mu$ as we did in Section \ref{sec:realistic_sim} namely one with equal density on the set $(-0.7,-0.3) \cup (0.3,0.7)$ and zero elsewhere. We justified the use of this form of prior which excludes values close to zero in Section \ref{sec:cnv_example} and it was shown to perform well on some realistically simulated data in Section \ref{sec:realistic_sim}. 




\begin{table}[ht]
\begin{center}
\begin{tabular}{cc|ccc|ccc}
    \hline
    \multicolumn{2}{c}{Truth} &  \multicolumn{3}{c}{PASS} &  \multicolumn{3}{c}{BARD}  \\
  \hline
Start & Length & Rep 1 & Rep 2 & Rep 3 & Rep 1 & Rep 2 & Rep 3  \\
  \hline
2619669 & 62144 & - & - & -  & - & \checkmark & \checkmark  \\

21422575 & 76266 & \checkmark & \checkmark & \checkmark  & \checkmark & \checkmark & \checkmark \\

32165010 & 456897 & \checkmark & \checkmark & \checkmark & \checkmark & \checkmark & \checkmark  \\

34328205 & 286367 & \checkmark & \checkmark & \checkmark & \checkmark & \checkmark & \checkmark  \\ 

54351338 & 28607 & \checkmark & \checkmark & -  & \checkmark & \checkmark & \checkmark  \\

70644511 & 21083 & - & \checkmark & \checkmark  & - & \checkmark & \checkmark \\
    \hline
\end{tabular}
\end{center}
\caption{Known CNV's from HapMap found by either method when analysing different replicates of data from chromosome 16. Ticks indicate whether the particular segment was detected or not.}
\label{tab:cnvchr16}
\end{table}


\begin{table}[h!]
\begin{center}
\begin{tabular}{cc|ccc|ccc}
    \hline
    \multicolumn{2}{c}{Truth} &  \multicolumn{3}{c}{PASS} &  \multicolumn{3}{c}{Bayesian}  \\
  \hline
Start & Length & Rep 1 & Rep 2 & Rep 3 & Rep 1 & Rep 2 & Rep 3  \\
  \hline
202353 & 37484 & -  & - &  - & \checkmark & \checkmark & -  \\
 
243700 & 80315 & \checkmark & \checkmark & \checkmark & \checkmark & \checkmark & \checkmark   \\ 

29945167 & 12079 & \checkmark  & \checkmark &  - & - & - & -  \\ 

31388080 &  61239 & \checkmark & - & -  & \checkmark & - & - \\ 

32562253 & 117686 &  \checkmark  & - & \checkmark  & - & - & - \\ 

32605094 & 74845 & -  & \checkmark & - & \checkmark & \checkmark & \checkmark  \\
 
32717276 & 22702 & \checkmark & - & -  & \checkmark & \checkmark & \checkmark  \\ 

74648953 & 9185 & \checkmark & \checkmark & - & \checkmark & \checkmark & \checkmark  \\ 

77073620 & 10881 & - & \checkmark & - & \checkmark & \checkmark & \checkmark  \\ 

77155307 & 781 & -  & - & -  & \checkmark & - & -  \\ 

77496587 & 12936 & - & - & -  & \checkmark & \checkmark & \checkmark  \\ 

78936990 & 18244 & \checkmark & \checkmark & \checkmark & \checkmark & \checkmark & \checkmark \\ 

103844669 & 24085 & \checkmark  & \checkmark  & \checkmark  & \checkmark  & \checkmark & \checkmark  \\ 

126225385 & 3084 & \checkmark  & \checkmark  & - & - & - & \checkmark   \\ 

139645437 & 3392 & -  & - & -  & \checkmark  & - & -  \\ 

165647807 & 4111 & -  & - & -  & \checkmark  & - & \checkmark   \\
    \hline
\end{tabular}
\end{center}
\caption{Known CNV's from HapMap found by either method when analysing different replicates of data from chromosome 6. Ticks indicate whether the particular segment was detected or not.}
\label{tab:cnvchr6}
\end{table}

\begin{table}[h!] 
\centering 
\begin{tabular}{c  c ccc} 
\hline\hline 
 Chromosome & Method & Rep 1 v 2 & Rep 1 v 3 & Rep 2 v 3
\\ [0.5ex] 
\hline 
 
 & PASS & 0.474  & 0.709  & 0.522  \\[-1ex]
 
\raisebox{1.5ex}{6}

 & BARD & 0.495 & 0.457 & 0.416  \\[1ex] 

 & PASS & 0.478 & 0.507 & 0.388    \\[-1ex] 

\raisebox{1.5ex}{16}

 & BARD & 0.426 & 0.467 & 0.682  \\[1ex]
\hline 
\end{tabular} 
\caption{The average consistency measured using the dissimilarity measure for found CNV's between replicates and methods. A lower value 
indicates the inferred segmentations for the two replicates were more similar.}
\label{tab:cnv_consistency}  
\end{table}
For both chromosomes we analysed the three replicates separately. Ideally we should infer exactly the same segmentation for each of the replicate data sets. 
Due to the large amount of noise present in the data this does not happen.  However we would expect that a ``better'' method would be more consistent across the three replicates, and we use 
the consistency of the inferred segmentations across the replicates as a measure of accuracy. 

We can also use data from the HapMap project to validate some of the CNV's we found to those known experimentally or which have been called by other authors. 
A list containing these known CNV's by chromosome and sample can be found at \texttt{http://hapmap.ncbi.nlm.nih.gov/}. 
These validated segments suggest that about 1\% of chromosome 16 is abnormal. 

To make comparisons between BARD and PASS fair we implemented both of these methods so that 
they identified the same proportion, $4\%$, of the chromosome as being abnormal. For BARD this involved choosing $\gamma$ in the loss function \eqref{eq:assym_loss} appropriately and for PASS selecting the most significant segments that give us a total of 4\% abnormal time points. We then tested these against the validated CNV's.


The results for chromosome 16 are contained in Tables~\ref{tab:cnvchr16} and \ref{tab:cnv_consistency}; and those for chromsome 6 in Tables~\ref{tab:cnvchr6} and \ref{tab:cnv_consistency}. Tables~\ref{tab:cnvchr16} and \ref{tab:cnvchr6}
list the known CNV regions that were detected by one or both methods for at least one replicate, whilst Table~\ref{tab:cnv_consistency} gives summaries of the consistency of the inferred
segmentations across replicates.

The results show that BARD is more successful at detecting known CNV regions than PASS. In total BARD found 6 CNV regions on chromosome 16 for 
at least one replicate, and 14 for chromosome 6, while PASS managed 5 and 11 respectively. For the measures of consistency across the different
replicates, shown in Table \ref{tab:cnv_consistency}, BARD performed better for 4 of the 6 pairs.

\section{Discussion}
\label{sec:discussion}

In this paper we have developed novel methodology to detect abnormal regions in multiple 
time series. Firstly we developed a general model for this type of problem including length of 
stay distributions and marginal likelihoods for normal and abnormal segments. We then derived 
recursions that could be used to calculate the posterior of interest and showed how to
obtain  iid samples from an accurate approximation to this posterior in a way that
scales linearly with the length of series. 

The resulting algorithm, BARD, 
was then compared in several simulation studies and some real data to another competing 
method PASS. These results showed that BARD was consistently more accurate than the PASS
benchmark on several important criteria for all of the data sets we considered.


The novelty of our method comes from being able to accurately and efficiently perform 
Bayesian inference for large and high dimensional data sets of this type thus allowing 
us to quantify uncertainty in the location of abnormal segments. Before this with other 
methods such as PASS this quantification of uncertainty has not been possible.


Whilst we have focused on changes in mean from some baseline level,
our method could easily be adapted to any model which specifies some normal behaviour 
and abnormal behaviour. The only restrictions we place on this is the ability to calculate 
marginal likelihoods for both types of segment. 
The only potential bottleneck would be in the calculation of the abnormal marginal 
likelihoods as this involves integration over a prior for the parameter(s) which cannot 
be done analytically, and for higher dimensional parameters would be computationally intensive.
 
R code to run the BARD method is available at the first authors website. \url{http://www.lancaster.ac.uk/pg/bardwell/Work.html}. 
The real CNV data we analysed in Section \ref{sec:real_data} is available publicly and can be downloaded from the GEO accession website \url{http://www.ncbi.nlm.nih.gov/geo/query/acc.cgi?acc=GSE25893}.

{\bf Acknowledgements}  We thank Idris Eckley for helpful comments and discussions. This research was supported
by EPSRC grant EP/K014463/1. Bardwell gratefully acknowledges funding from EPSRC and British Telecom via the STOR-i Centre
for Doctoral Training.

\appendix
\appendix

\section{Lemmas for Proof of Theorem \ref{thm:1}}


Throughout this and the following section, we will assume the data is generated from the model detailed in Section \ref{asymptotics}.

We define part of the ratio in \eqref{eq:P1} as $X_k(\mu)$
\begin{align*}
  X_k(\mu) = \exp \left\{ \mu \sum_{u=t}^{s} \left( Y_{k,u} - \frac{\mu}{2} \right)    \right\}.
\end{align*}
The random variable $X_k(\mu)$ is log-normally distributed with different parameters depending on whether the sequence is normal or abnormal for that segment. In the normal segment case it is log-normal 
with parameters $-\mu^2(s-t+1)/2$ and $\mu^2(s-t+1)$, 
\begin{align*}
 \textrm{with } \mathbb{E}X_k(\mu) = 1.
\end{align*}
For this case we will further define the $m$th central moment of $X_k(\mu)$ to be $C_m(\mu)$
\begin{align*}
  C_m(\mu) = \mathbb{E} \left[ \left( X_k(\mu) - \mathbb{E}X_k(\mu) \right)^m \right].
\end{align*}

Finally  we denote the log of the product over the $d$ terms in \ref{eq:P1} as $S_d(\mu)$, taking the logarithm makes this become a sum over all the time-series
\begin{align*}
S_d(\mu) = \sum_{k=1}^{d} \log( 1 + p_d( X_k(\mu) - 1 ) ).
\end{align*}
We now go on to prove several lemmas about $S_d(\mu)$ for both normal  segments which will aid us in proving Theorems \ref{thm:1}.


\begin{lemma}[Normal segment moment bounds]
\label{norm_seg_bound}
Assume we have a normal segment then 
\begin{align}
\begin{split}
  \mathbb{E}S_d(\mu) \leq - \frac{1}{2}C_2(\mu)dp_d^2 + \frac{1}{3}C_3(\mu)dp_d^3 \\
  \mathbb{E} \left[ \left( S_d(\mu) -  \mathbb{E}S_d(\mu) \right)^{2k} \right] \leq K_{k}(\mu) d^k p_d^{2k}
\end{split}
\end{align}
 where $C_m(\mu)$ is the $m$th central moment of $X_m(\mu)$, and $K_k(\mu)>0$ does not depend on $d$. 
\end{lemma}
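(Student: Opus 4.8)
The plan rests on the independence of the $d$ time-series, which makes $S_d(\mu)$ a sum of iid terms. Write $W_j=\log\!\left(1+p_d(X_j(\mu)-1)\right)$, so that $S_d(\mu)=\sum_{j=1}^d W_j$ with the $W_j$ iid. For the mean bound I would use the elementary inequality
\[
\log(1+x)\le x-\frac{x^2}{2}+\frac{x^3}{3},\qquad x>-1,
\]
which holds because the difference of the two sides has derivative $x^3/(1+x)$ and hence attains its global minimum, $0$, at $x=0$. Applying it with $x=p_d(X_j(\mu)-1)$---valid once $p_d<1$, since $X_j(\mu)\ge 0$ forces $x\ge -p_d>-1$---and taking expectations, the linear term vanishes because $\mathbb{E}X_j(\mu)=1$, while the quadratic and cubic terms are precisely $C_2(\mu)$ and $C_3(\mu)$ by definition. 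Summing the $d$ identical contributions yields the first inequality at once.

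For the $2k$-th central moment, set $Z_j=W_j-\mathbb{E}W_j$, so $S_d(\mu)-\mathbb{E}S_d(\mu)=\sum_{j=1}^d Z_j$ is a sum of iid, mean-zero variables. I would control this with a Rosenthal-type bound,
\[
\mathbb{E}\Big[\Big(\sum_{j=1}^d Z_j\Big)^{2k}\Big]\le R_k\Big[\big(d\,\mathbb{E}Z_1^2\big)^{k}+d\,\mathbb{E}|Z_1|^{2k}\Big],
\]
the same estimate being obtainable by expanding the power multinomially and discarding every monomial in which some index has multiplicity one (these vanish by independence and centering), after which the surviving monomials involve at most $k$ distinct indices and so contribute the factor $d^{k}$. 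Granting this, everything reduces to the single-coordinate estimates $\mathbb{E}Z_1^2\le K p_d^2$ and $\mathbb{E}|Z_1|^{2k}\le K p_d^{2k}$ for some $K=K(\mu,k)$ independent of $d$: the first bracket is then $O\!\big((dp_d^2)^k\big)=O(d^k p_d^{2k})$, and the second is $O(d\,p_d^{2k})=O(d^k p_d^{2k})$ since $k\ge 1$.

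The main obstacle is exactly these single-coordinate moment bounds, because $X_1(\mu)$ is log-normal (hence heavy-tailed) and $W_1$ involves a logarithm, so it is not obvious that $\mathbb{E}|W_1|^m$ scales like $p_d^m$. The key is to split according to the sign of $X_1(\mu)-1$. On $\{X_1(\mu)\ge 1\}$ one has $0\le \log(1+p_d(X_1(\mu)-1))\le p_d(X_1(\mu)-1)$, and since a log-normal variable has finite moments of all orders, $\mathbb{E}\big[(X_1(\mu)-1)^{m}\big]$ is a finite constant depending only on $\mu$ and $m$; this tames the heavy tail. On $\{X_1(\mu)<1\}$ the argument lies in $(-p_d,0)$, so $|\log(1+p_d(X_1(\mu)-1))|\le -\log(1-p_d)=O(p_d)$. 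Combining the two cases gives $\mathbb{E}|W_1|^{m}\le K_m(\mu)\,p_d^{m}$ for each $m$, and since $\mathbb{E}W_1=O(p_d^2)$ the $c_r$-inequality transfers the same scaling to the centered variable, $\mathbb{E}|Z_1|^{m}\le K_m'(\mu)\,p_d^{m}$. Taking $m=2$ and $m=2k$ supplies the two estimates needed above, completing the argument.
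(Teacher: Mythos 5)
Your argument is correct and follows essentially the same route as the paper: the cubic Taylor bound $\log(1+x)\le x-x^2/2+x^3/3$ for the mean, and a multinomial expansion of the centred $2k$-th power (the paper's version of your Rosenthal step, with the same counting of at most $k$ distinct indices giving the $d^k$ factor) combined with single-coordinate moment bounds of order $p_d^{m}$. Your derivation of those single-coordinate bounds by splitting on the sign of $X_1(\mu)-1$ is in fact slightly more careful than the paper's, which invokes $|\log(1+x)|\le |x|+x^2/2$ --- an inequality that fails for $x<0$ as stated, though harmlessly so here since the argument is bounded below by $-p_d$.
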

\begin{proof}
Writing out the expectation of $S_d(\mu)$ gives
  \begin{align}
    \mathbb{E}S_d(\mu) = \sum_{k=1}^{d} \mathbb{E} \left[ \log( 1 + p_d( X_k(\mu) - 1 ) ) \right]
  \end{align}
then we use the inequality $\log(1+x) \leq x - \frac{x^2}{2} + \frac{x^3}{3}$ for $x > 0$. So
\begin{align*}
   \sum_{k=1}^{d} \mathbb{E} \left[ \log( 1 + p_d( X_k(\mu) - 1 ) ) \right] &\leq \sum_{k=1}^{d} \mathbb{E} \left[ p_d( X_k(\mu) - 1 ) - \frac{p_d^2(X_k(\mu)-1)^2}{2} + \frac{p_d^3(X_k(\mu)-1)^3}{3} \right] \\
&= \sum_{k=1}^{d} -p_d^2 \frac{ \mathbb{E} \left[ (X_k(\mu)-1)^2 \right] }{2} + p_d^3 \frac{\mathbb{E} \left[ (X_k(\mu)-1)^3 \right]}{3} \\
&= - \frac{1}{2}C_2(\mu)dp_d^2 + \frac{1}{3}C_3(\mu)dp_d^3.
\end{align*}

Now to derive the second inequality we consider $S_d(\mu)-\mathbb{E}S_d(\mu)$
\begin{align*}
  S_d(\mu) - \mathbb{E}S_d(\mu) &= \sum_{i=1}^{d}\left[ Z_i(\mu) - \mathbb{E}Z_i(\mu)\right]  
 = \sum_{i=1}^{d} \bar{Z}_i(\mu),
\end{align*}
where $\bar{Z}_i(\mu)=Z_i(\mu)-\mathbb{E}Z_i(\mu)$.
Writing this in terms of the centered random variables $\bar{Z}_i(\mu)$ is advantageous as when we consider raising the sum to the $2k$th power any term including a unit power of $\bar{Z}_i(\mu)$ 
vanishes by independence as $\mathbb{E}\bar{Z}_i(\mu) = 0$. Define 
\[
 \mathcal{I}_{d,k}=\left\{ (j_1,\ldots,j_d): j_i\in\{0,2,3,\ldots,2k\} \mbox{ for } i=1,\ldots,d \mbox{ and } \sum_{i=1}^d j_i=2k\right\},
\]
the set of non-negative integer vectors of length $d$, whose entries sum to $2k$, and that have no-entry that is equal to 1. For $\mathbf{j}\in \mathcal{I}_{d,k}$, let $n_{\mathbf{j}}$ be the number of terms in the
expansion of $(\sum_{i=1}^{d} \bar{Z}_i(\mu))^{2k}$ which have powers $j_i$ for $\bar{Z}_i(\mu))$. Thus
\begin{eqnarray*}
 \mathbb{E}\left[ \left( S_d(\mu) - \mathbb{E}S_d(\mu) \right)^{2k} \right]&=& \mathbb{E}\left[\left(  \sum_{i=1}^{d} \bar{Z}_i(\mu) \right)^{2k} \right]\\
  &=& \sum_{\mathbf{j}\in\mathcal{I}_{d,k}} n_{\mathbf{j}} \prod_{i=1}^d \mathbb{E}\left(\bar{Z}_i(\mu)^{j_i} \right) \\
  &\leq& \mathbb{E}\left(\bar{Z}_1(\mu)^{2k} \right)\sum_{\mathbf{j}\in\mathcal{I}_{d,k}} n_{\mathbf{j}}.
\end{eqnarray*}
Using $|\log(1+x)|\leq |x|+x^2/2$, we can bound $\mathbb{E}(\bar{Z}_1^{2k})$ by $A_k(\mu)p_d^{2k}$, where $A_k(\mu)$ will depend only on the the first $2k$ moments of $X_k(\mu)$, but not on $p_d$.
Finally note that each term in $\mathcal{I}_{d,k}$ can only involve vectors with at most $k$ non-zero components. For a term with $l$ non-zero-components there will be $O(d^l)$ possible choices
for which components are non-zero. Hence we have that
\[
 \sum_{\mathbf{j}\in\mathcal{I}_{d,k}} n_{\mathbf{j}} \leq B_k d^k ,
\]
for some constant $B_k$ that does not depend on $d$. Thus we have the required result, with $K_k(\mu)=A_k(\mu)B_k$.
\end{proof}

\begin{lemma}[Probability bound]
\label{prob_bound}
Fix $\mu$  and assume $p_d\rightarrow 0$ as $d\rightarrow \infty$. For a normal segment we have that there exists $D_k(\mu)>0$ such that for sufficiently large $d$ 
\begin{align}
  \Pr \left( S_d(\mu) \geq -\frac{1}{4}C_2(\mu) dp_d^2 \right) \leq \frac{D_k(\mu)}{d^k p_d^{2k}}.
\end{align}  
\end{lemma}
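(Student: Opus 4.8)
The plan is to derive the tail bound by a Chebyshev-type argument at order $2k$, feeding in both moment estimates from Lemma \ref{norm_seg_bound}. The first step is to record that, for a normal segment, $X_k(\mu)$ is a non-degenerate log-normal variable (its log-variance $\mu^2(s-t+1)$ is strictly positive since the prior support keeps $\mu\neq 0$), so its variance satisfies $C_2(\mu)>0$. This is what makes the target threshold $-\tfrac14 C_2(\mu)dp_d^2$ a genuinely negative quantity of order $dp_d^2$, and it guarantees that the constant produced at the end is finite.

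Next I would use the mean bound $\mathbb{E}S_d(\mu)\le -\tfrac12 C_2(\mu)dp_d^2+\tfrac13 C_3(\mu)dp_d^3$ together with the hypothesis $p_d\to 0$ to show that the event threshold lies strictly above the mean by a margin of order $dp_d^2$. Indeed, the cubic correction obeys $\tfrac13 C_3(\mu)dp_d^3 = o(dp_d^2)$, so for all sufficiently large $d$ one has $\mathbb{E}S_d(\mu)\le -\tfrac38 C_2(\mu)dp_d^2$ (any fraction strictly between $\tfrac14$ and $\tfrac12$ would serve equally well). Writing $\delta_d:=-\tfrac14 C_2(\mu)dp_d^2-\mathbb{E}S_d(\mu)$, this gives $\delta_d\ge \tfrac18 C_2(\mu)dp_d^2>0$ for large $d$.

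The final step is to apply Markov's inequality to $\big(S_d(\mu)-\mathbb{E}S_d(\mu)\big)^{2k}$. I would rewrite the event of interest as $\{S_d(\mu)-\mathbb{E}S_d(\mu)\ge\delta_d\}\subseteq\{|S_d(\mu)-\mathbb{E}S_d(\mu)|\ge\delta_d\}$ and bound its probability by $\mathbb{E}\big[(S_d(\mu)-\mathbb{E}S_d(\mu))^{2k}\big]/\delta_d^{2k}$. Substituting the second bound of Lemma \ref{norm_seg_bound}, namely $\mathbb{E}[(S_d(\mu)-\mathbb{E}S_d(\mu))^{2k}]\le K_k(\mu)d^k p_d^{2k}$, together with $\delta_d\ge\tfrac18 C_2(\mu)dp_d^2$, yields
\[
\Pr\Big(S_d(\mu)\ge-\tfrac14 C_2(\mu)dp_d^2\Big)\le\frac{K_k(\mu)d^k p_d^{2k}}{\big(\tfrac18 C_2(\mu)dp_d^2\big)^{2k}}=\frac{8^{2k}K_k(\mu)}{C_2(\mu)^{2k}}\cdot\frac{1}{d^k p_d^{2k}},
\]
so the claim holds with $D_k(\mu)=8^{2k}K_k(\mu)/C_2(\mu)^{2k}$.

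The argument is essentially bookkeeping once the two moment bounds are in hand, and there is no deep obstacle. The one point that needs care is the second step: keeping track of the exact order of the mean so that the separation $\delta_d$ is provably of order $dp_d^2$ (and not smaller) for large $d$. This is precisely where the assumption $p_d\to 0$ is used, to absorb the cubic correction into the quadratic term, and it also explains why the constant $\tfrac14$ rather than $\tfrac12$ appears in the statement: it leaves a fixed-proportion gap between the mean and the threshold for the Chebyshev bound to exploit.
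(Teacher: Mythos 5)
Your proposal is correct and follows essentially the same route as the paper: centre $S_d(\mu)$, use the mean bound from Lemma \ref{norm_seg_bound} and $p_d\to 0$ to show the threshold sits a distance of order $dp_d^2$ above $\mathbb{E}S_d(\mu)$, then apply Markov's inequality to the $2k$th central moment. The only difference is cosmetic --- you absorb the cubic correction to get a gap of $\tfrac18 C_2(\mu)dp_d^2$ where the paper obtains $\tfrac1{12}C_2(\mu)dp_d^2$, changing only the value of the constant $D_k(\mu)$.
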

\begin{proof}
We first bound the probability by the absolute value of the centered random variable and then use Markov's inequality with an even power of the form $2k$
\begin{align*}
  \Pr\left( S_d(\mu) \geq - \frac{1}{4}C_2(\mu)dp_d^2 \right) &\leq \Pr \left( \left\lvert S_d(\mu) - \mathbb{E}S_d(\mu) \right\rvert \geq \frac{1}{4}C_2(\mu)dp_d^2 -\frac{1}{3}C_3(\mu)dp_d^3 \right) \\
&\leq \frac{ \mathbb{E}\left[ \left( S_d(\mu) - \mathbb{E}S_d(\mu) \right)^{2k} \right] }{ (\frac{1}{4}C_2(\mu)dp_d^2-\frac{1}{3}C_3(\mu)dp_d^3)^{2k} }. 
\end{align*}
For $d$ sufficiently large that $2C_3(\mu) p_d< C_2(\mu)$, we have
\[
 \frac{1}{4}C_2(\mu)dp_d^2-\frac{1}{3}C_3(\mu)dp_d^3 > \frac{1}{12}C_2(\mu)dp_d^2.
\]
Now using the result from Lemma \ref{norm_seg_bound} we can replace the $2k$th centered moment by the bound we obtained above. Thus for sufficiently large $d$, 
\begin{align*}
  \Pr\left( S_d(\mu) \geq - \frac{1}{4}C_2(\mu)dp_d^2 \right) \leq \frac{K_k(\mu)d^kp_d^{2k}}{ (\frac{1}{12}C_2(\mu)dp_d^2)^{2k} } 
\end{align*}
So the result holds with $D_k(\mu)=K_k(\mu)[C_2(\mu)/12]^{-2k}$.
\end{proof}

\begin{lemma}[Lower bound for the second derivative of $S_d(\mu)$]
\label{second_derivative}
We have that 
\begin{align*}
 \frac{\mbox{d}^2 S_d(\mu)}{ \mbox{d} \mu^2} \geq -d(s-t+1)  
\end{align*}
\end{lemma}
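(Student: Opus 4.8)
The plan is to differentiate $S_d(\mu)$ term by term and to bound each summand's second derivative below by $-(s-t+1)$, after which summing over the $d$ time-series gives the claim. The useful structure is that $X_k(\mu)=e^{h_k(\mu)}$, where
\[
 h_k(\mu)=\mu\sum_{u=t}^{s}\left(Y_{k,u}-\tfrac{\mu}{2}\right)=\mu\sum_{u=t}^{s}Y_{k,u}-\frac{(s-t+1)}{2}\mu^2
\]
is a quadratic in $\mu$, so its second derivative is the \emph{constant} $h_k''(\mu)=-(s-t+1)$. Writing each summand of $S_d$ as a composition $f_k(\mu)=\phi\bigl(h_k(\mu)\bigr)$ with $\phi(h)=\log\bigl(1+p_d(e^{h}-1)\bigr)$, the chain rule gives
\[
 f_k''(\mu)=\phi''\bigl(h_k(\mu)\bigr)\,\bigl(h_k'(\mu)\bigr)^2+\phi'\bigl(h_k(\mu)\bigr)\,h_k''(\mu).
\]

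The key step is to control the outer derivatives $\phi'$ and $\phi''$. A direct computation yields
\[
 \phi'(h)=\frac{p_d e^{h}}{1-p_d+p_d e^{h}},\qquad
 \phi''(h)=\frac{p_d(1-p_d)\,e^{h}}{\bigl(1-p_d+p_d e^{h}\bigr)^2}.
\]
Since $p_d\in[0,1]$, the numerator of $\phi'(h)$ is at most its denominator, so $0<\phi'(h)\le 1$; and $\phi''(h)\ge 0$ because both $p_d$ and $1-p_d$ are non-negative. Combining these with $h_k''(\mu)=-(s-t+1)<0$ and $(h_k'(\mu))^2\ge 0$, the first term in the chain-rule expression is non-negative, whence
\[
 f_k''(\mu)\ge \phi'\bigl(h_k(\mu)\bigr)\,h_k''(\mu)=-(s-t+1)\,\phi'\bigl(h_k(\mu)\bigr)\ge-(s-t+1).
\]

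Summing over $k=1,\ldots,d$ then gives $\dfrac{\mathrm{d}^2 S_d}{\mathrm{d}\mu^2}=\sum_{k=1}^d f_k''(\mu)\ge -d(s-t+1)$, as required. I do not expect any serious obstacle here: the argument is an elementary chain-rule computation, and the only points needing care are the two sign facts $\phi''\ge 0$ and $\phi'\le 1$, both of which hinge solely on $p_d$ being a probability in $[0,1]$. The mild subtlety worth flagging is that the bound comes entirely from the \emph{linear-in-$h''$} contribution (the convex term $\phi''(h_k')^2$ only helps), so the constant $-(s-t+1)$ is exactly the contribution of $h_k''$ weighted by the maximal value $1$ of $\phi'$.
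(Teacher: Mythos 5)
Your proof is correct and is essentially the paper's argument in a cleaner packaging: your $\phi'(h_k(\mu))$ is exactly the paper's $Q_k$, your observation $\phi''\ge 0$ is the paper's $Q_k-Q_k^2\ge 0$ with $0\le Q_k\le 1$, and both proofs discard the nonnegative term multiplying $(h_k'(\mu))^2$ and bound the remainder by $-(s-t+1)$ per summand.
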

\begin{proof}
Firstly note that 
\begin{align*}
  \frac{\mbox{d} X_k(\mu) }{ \mbox{d} \mu} = \left( \sum_{u=t}^{s} y_{k,u} - \mu(s-t+1) \right)X_k(\mu).
\end{align*}

Now differentiating $S_d(\mu)$ twice  
\begin{align*}
\frac{\mbox{d} S_d(\mu) }{ \mbox{d} \mu} &= \sum_{k=1}^{d} \frac{p_d \left( \sum_{u=t}^{s}y_{k,u} - \mu (s-t+1) \right) X_k(\mu) }{1 + p_d ( X_k(\mu) -1 )} \\
\frac{\mbox{d}^2 S_d(\mu) }{ \mbox{d} \mu^2} &= \sum_{k=1}^{d} \frac{ -p_d(s-t+1)X_k(\mu) + p_d\left( \sum_{u=t}^{s} y_{k,u}-\mu(s-t+1) \right)^2 X_k(\mu) }{ 1 + p_d( X_k(\mu) - 1 )  }  \\
& - \left( \sum_{u=t}^{s} y_{k,u}-\mu(s-t+1) \right)^2 \left( \frac{ p_d X_k(\mu) }{ 1 + p_d(X_k(\mu) - 1 ) } \right)^2 
\end{align*}
Let 
\begin{align*}
  Q_k = \frac{p_dX_k(\mu) }{1 + p_d ( X_k(\mu) -1 )}
\end{align*}
and $0 \leq Q_k \leq 1$ as $1-p_d > 0$ (or $p_d < 1$). Thus the second derivative 
\begin{align*}
  \frac{\mbox{d}^2 S_d(\mu) }{\mbox{d} \mu^2} &= \sum_{k=1}^d \left[ -(s-t+1)Q_k + \left( \sum_{u=t}^{s}y_{k,u} - \mu(s-t+1) \right)^2 (Q_k-Q_k^2) \right] \\
& \geq \sum_{k=1}^d -(s-t+1)Q_k \geq -d(s-t+1) 
\end{align*}
has the required lower bound.
\end{proof}

\begin{lemma}[Detection of normal segments]
\label{case_1}
Let $\pi(\mu)$ be a density function with support $[a,b]$ with $a>0$ and $b<\infty$, and assume $1/p_d = O(d^{\frac{1}{2}-\epsilon})$ for some $\epsilon>0$. For a normal segment $[t,s]$, 
\begin{align}
\label{eq:normseg}
  \int \left\{ \prod_{k=1}^d \frac{P_{A,k}(t,s;\mu)}{P_{N,k}(t,s)}\right\} \pi(\mu) \mbox{d}\mu \rightarrow 0
\end{align}
in probability as $d \rightarrow \infty$.
\end{lemma}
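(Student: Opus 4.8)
The plan is to observe that the integrand is exactly $e^{S_d(\mu)}$: by \eqref{eq:P1} we have $\prod_{k=1}^d \frac{P_{A,k}(t,s;\mu)}{P_{N,k}(t,s)} = \prod_{k=1}^d \left( 1 + p_d(X_k(\mu)-1) \right) = \exp\{S_d(\mu)\}$, so \eqref{eq:normseg} reads $\int_a^b e^{S_d(\mu)}\pi(\mu)\,\mbox{d}\mu$. It therefore suffices to exhibit an event of probability tending to $1$ on which $S_d(\mu)$ is uniformly very negative across all of $[a,b]$. The assumption $1/p_d = O(d^{1/2-\epsilon})$ gives $p_d \geq c\,d^{-1/2+\epsilon}$ and hence $dp_d^2 \geq c^2 d^{2\epsilon}\to\infty$, so a uniform bound of the form $S_d(\mu)\leq -\tfrac18 c_2 dp_d^2$ would force the integral below $e^{-\frac18 c_2 dp_d^2}\to 0$, using $\int\pi=1$.

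First I would control $S_d$ pointwise on a finite grid. Since $[a,b]$ is compact with $a>0$, the central moment $C_2(\mu)$ (the log-normal variance $e^{\mu^2(s-t+1)}-1$) is bounded below by some $c_2>0$, and the constant $D_k(\mu)$ of Lemma \ref{prob_bound} is bounded above by some $\bar D_k$, uniformly in $\mu$. Placing a grid $\mu_1<\cdots<\mu_N$ on $[a,b]$ with spacing $\delta$, a union bound with Lemma \ref{prob_bound} gives
\[
\Pr\Big(\exists\, i:\ S_d(\mu_i) \geq -\tfrac14 C_2(\mu_i)dp_d^2\Big) \leq N\,\frac{\bar D_k}{d^k p_d^{2k}}.
\]
I would take $\delta \propto p_d^2$, so $N \propto 1/p_d^2$ and the right-hand side is of order $1/(d^k p_d^{2k+2})$; substituting $p_d \geq c\,d^{-1/2+\epsilon}$ makes this $O\big(d^{\,1-2\epsilon(k+1)}\big)$, which tends to $0$ once $k$ is large enough that $2\epsilon(k+1)>1$. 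The freedom to choose $k$ arbitrarily large, supplied by the moment bound of Lemma \ref{norm_seg_bound}, is exactly what lets me afford a grid of resolution $p_d^2$ while still killing the union bound.

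The remaining and most delicate step is interpolation between grid points, where Lemma \ref{second_derivative} enters. Because $S_d''(\mu)\geq -d(s-t+1)$, the map $\mu\mapsto S_d(\mu)+\tfrac{d(s-t+1)}{2}\mu^2$ is convex, so on each subinterval $[\mu_i,\mu_{i+1}]$ it is dominated by the larger of its two endpoint values; undoing the quadratic correction and bounding $\max(\mu_i^2,\mu_{i+1}^2)-\mu^2 \leq 2b\delta$ yields, for every $\mu$ in the subinterval,
\[
S_d(\mu) \leq \max\{S_d(\mu_i),S_d(\mu_{i+1})\} + d(s-t+1)\,b\,\delta.
\]
With $\delta\propto p_d^2$ the error term $d(s-t+1)b\delta$ is $O(dp_d^2)$ and, by choosing the proportionality constant small, at most $\tfrac18 c_2 dp_d^2$; combined with the grid bound $S_d(\mu_i)\leq -\tfrac14 c_2 dp_d^2$ this gives the uniform estimate $S_d(\mu)\leq -\tfrac18 c_2 dp_d^2$ on all of $[a,b]$, on the good event. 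I then conclude $\int_a^b e^{S_d(\mu)}\pi(\mu)\,\mbox{d}\mu \leq e^{-\frac18 c_2 dp_d^2}\to 0$, and since the good event has probability tending to $1$, the convergence holds in probability. The main obstacle is precisely this balancing act: the convex interpolation forces $\delta$ to shrink like $p_d^2$, which inflates $N$, and only the arbitrarily-high-moment bound of Lemma \ref{norm_seg_bound} rescues the union bound.
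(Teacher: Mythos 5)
Your proposal is correct and follows essentially the same route as the paper: reduce to $e^{S_d(\mu)}$, control $S_d$ on a grid via Lemma \ref{prob_bound} and a union bound killed by the arbitrarily-high moments of Lemma \ref{norm_seg_bound}, then interpolate using the curvature bound of Lemma \ref{second_derivative}. The only (immaterial) difference is that your endpoint-maximum convexity bound is linear in the grid spacing, forcing spacing $\propto p_d^2$, whereas the paper's quadratic interpolation gets away with spacing $\propto p_d$; both are absorbed by the union bound.
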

\begin{proof}
Define $C_2=\min_{\mu\in[a,b]} C_2(\mu)$, and for a given $d$, $M_d$ to be the smallest integer that is greater than
\[
\frac{(b-a)\sqrt{s-t+1}}{p_d\sqrt{C_2}}. 
\]
Define $\Delta_d=(b-a)/M_d$. Now we can partition $[a,b]$ into $M_d$ intervals of the form $[\mu_{i-1} , \mu_{i}]$ for $i=1,\ldots,M_d$, where $\mu_{i}=a+i\Delta_d$.
Then the left-hand side of \eqref{eq:normseg} can be rewritten as
\begin{align*}
  \sum_{i=1}^{M_d} \int_{\mu_{i-1}}^{\mu_{i}}  \left\{  \prod_{k=1}^{d} \left[ 1 + p_d(X_k(\mu) - 1) \right] \right\} \pi(\mu) \mbox{d}\mu. 
\end{align*}
Remember that $S_d(\mu)=\sum_{k=1}^d \log [1 + p_d(X_k(\mu) - 1)]$.
Let $E_d$ be the event that 
\[
 S_d(\mu)< -\frac{1}{4}C_2 d p_d^2, \mbox{for all $\mu=\mu_i$, $i=0,\ldots,M_d$}.
\]
If this event occurs then
\[
 \max_{\mu\in[a,b]} S_d(\mu) < -\frac{1}{4}C_2 d p_d^2+ \Delta_d^2 d(s-t+1)/8,
\]
as using Lemma \ref{second_derivative} we can bound $S_d(\mu)$ on each interval $[\mu_i,\mu_{i+1}]$ by a quadratic with second derivative $-d(s-t+1)$ and which takes values $-\frac{1}{4}C_2 d p_d^2$ at
the end-points. 

Now by definition of $\Delta_d$,
\[
 -\frac{1}{4}C_2 d p_d^2+ \Delta_d^2 d(s-t+1)/8 < -\frac{1}{4}C_2 d p_d^2 + \frac{1}{8}C_2 d p_d^2 \rightarrow -\infty
\]
as $d \rightarrow \infty$ because $dp_d^2\rightarrow \infty$ under our assumption on $p_d$. Thus to prove the Lemma we need only show that event $E_d$ occurs with probability 1 as $d\rightarrow \infty$.

We can bound the probability of $E_d$ not occurring using Lemma \ref{prob_bound}. For any integer $k>0$ we have that the probability $E_d$ does not occur is
\begin{eqnarray*}
\sum_{i=1}^{M_d+1} \Pr\left(S_d(\mu_i)\geq -\frac{1}{4}C_2dp_d^2\right) &\leq&  \sum_{i=1}^{M_d+1} \Pr\left(S_d(\mu_i)\geq -\frac{1}{4}C_2(\mu_i)dp_d^2\right) \\
&\leq& \sum_{i=1}^{M_d+1} \frac{D_k(\mu_i)}{d^kp_d^{2k}} \\
&\leq& (M_d+1) \max_{\mu\in [a,b]} \frac{D_k(\mu)}{d^kp_d^{2k}}.
\end{eqnarray*}
Here $D_k(\mu)$ is defined in Lemma \ref{prob_bound}. It is finite for any $\mu$, and hence  $\max_{\mu\in [a,b]} D_k(\mu)$ is finite.

Now $M_d=O(p_d^{-1})$, so we have that the above probability is $O(d^{-k}p_d^{-2k-1})=O(d^{1/2-(2k+1)\epsilon})$. So by choosing $k>1/(4\epsilon)$ this is $O(d^{-\epsilon})$ which tends to 0 as required.
\end{proof} 

\section{Lemmas for Proof of Theorem \ref{thm:2}}

We use the same notation as in Section \ref{thm:1}. However, we will now consider an abnormal segment from positions $t$ to $s$. Let $\alpha_d$ denote the proportion of sequences that
are abnormal, and $\mu_0$ the mean. The observations in this segment come from a two component mixture. With probability $\alpha_d$ they are normally distributed with mean $\mu_0$ and variance 1;
otherwise they have a standard normal distribution. It is straightforward to show that for such an abnormal segment,
\begin{align}
  \mathbb{E}X_k(\mu) = (1-\alpha_d) + \alpha_de^{\mu\mu_0(s-t+1)}. \label{eq:m1}
\end{align}

\begin{lemma}[Abnormal segments, expectation and variance]
\label{abnormal_mean_var}
Assume we have an abnormal segment $[t,s]$ with the mean of affected dimensions being $\mu_0$.
Let $f(\mu)$ be a density function with support  $A \subset \mathbb{R} $ then
\begin{align*}
  \mathbb{E} \left[  \int_{A} S_d(\mu) f(\mu) \mbox{d}\mu \right] \geq D_1(\mu) dp_d \\
  \mathrm{Var}\left( \int_{A} S_d(\mu) f(\mu) \mbox{d}\mu \right) \leq D_2(\mu) dp_d^2+o(dp_d^2)
\end{align*}
 with
 \begin{eqnarray}
  D_1(\mu)&=&\min_{\mu \in A} \left( \mathbb{E}[X_k(\mu) -1] - \frac{p_d}{2} \mathbb{E}[ (X_k(\mu)-1)^2 ]  \right) \label{eq:l1}\\
 &=& \min_{\mu \in A} \left[ \alpha_d( e^{\mu\mu_0(s-t+1)} - 1 ) -\frac{p_d}{2}\left( e^{\mu^2(s-t+1)} - 1 \right) - \frac{\alpha_dp_dC(\mu)}{2}  \right]\label{eq:l2} \\
 C(\mu) &=& e^{\mu^2(s-t+1)}( e^{2\mu\mu_0(s-t+1)} - 1 ) - 2(e^{\mu\mu_0(s-t+1)} - 1) \nonumber
 \end{eqnarray}
 and
\[
 D_2(\mu)=\max_{\mu \in A} \mathbb{E}\left[(X_k(\mu) - 1)^2\right].
\]
\end{lemma}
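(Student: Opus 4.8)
The plan is to exploit the independence of the $d$ time series to reduce both bounds to one-dimensional moment computations for a single sequence, and then to control the logarithm in $S_d(\mu)=\sum_{k=1}^d\log(1+p_d(X_k(\mu)-1))$ by its second-order Taylor expansion. Write $L=s-t+1$. Because the sequences are independent and identically distributed across $k$, we have $\mathbb{E}[S_d(\mu)]=d\,\mathbb{E}[\log(1+p_d(X_1(\mu)-1))]$, and the per-sequence integrals $W_k:=\int_A\log(1+p_d(X_k(\mu)-1))f(\mu)\,\mbox{d}\mu$ are i.i.d., so $\mathrm{Var}(\int_A S_d(\mu)f(\mu)\,\mbox{d}\mu)=d\,\mathrm{Var}(W_1)$. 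The two moments needed are obtained from the log-normal moment generating function applied separately to the normal and abnormal sub-populations of a sequence (probabilities $1-\alpha_d$ and $\alpha_d$): together with \eqref{eq:m1} this gives $\mathbb{E}[X_1(\mu)-1]=\alpha_d(e^{\mu\mu_0 L}-1)$ and $\mathbb{E}[(X_1(\mu)-1)^2]=(e^{\mu^2L}-1)+\alpha_dC(\mu)$, the latter after collecting the mixture terms into exactly the expression defining $C(\mu)$.

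For the expectation bound I would use the elementary inequality $\log(1+x)\ge x-\tfrac12x^2$, giving $\mathbb{E}[\log(1+p_d(X_1(\mu)-1))]\ge p_d\mathbb{E}[X_1(\mu)-1]-\tfrac{p_d^2}{2}\mathbb{E}[(X_1(\mu)-1)^2]$ for each fixed $\mu$. Multiplying by $d$, integrating against $f$ (a probability density) and bounding the integrand below by its minimum over $\mu\in A$ yields $\mathbb{E}[\int_A S_d f\,\mbox{d}\mu]\ge D_1(\mu)\,dp_d$ with $D_1$ in the form \eqref{eq:l1}; substituting the two closed-form moments above turns \eqref{eq:l1} into \eqref{eq:l2}. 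For the variance I would bound $\mathrm{Var}(W_1)\le\mathbb{E}[W_1^2]$ and apply Jensen's inequality in the $\mu$-integral (with respect to $f$) to pull the square inside, $W_1^2\le\int_A(\log(1+p_d(X_1(\mu)-1)))^2 f(\mu)\,\mbox{d}\mu$. Using $|\log(1+x)|\le |x|/(1-p_d)$ on the relevant range $x=p_d(X_1-1)\ge -p_d$ gives $(\log(1+p_d(X_1(\mu)-1)))^2\le \tfrac{p_d^2}{(1-p_d)^2}(X_1(\mu)-1)^2$; taking expectations, bounding by $\max_{\mu\in A}\mathbb{E}[(X_1(\mu)-1)^2]=D_2(\mu)$ and multiplying by $d$ gives $\mathrm{Var}\le \tfrac{dp_d^2}{(1-p_d)^2}D_2(\mu)$, and expanding $(1-p_d)^{-2}=1+o(1)$ produces the stated $D_2(\mu)\,dp_d^2+o(dp_d^2)$.

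I expect the main obstacle to be the expectation bound, specifically the event $\{X_k(\mu)<1\}$ on which $x=p_d(X_k(\mu)-1)\in[-p_d,0)$ and the inequality $\log(1+x)\ge x-\tfrac12 x^2$ fails. On that event the shortfall is $\int_x^0 t^2/(1+t)\,\mbox{d}t=O(p_d^3)$, so summing over the $d$ sequences contributes only $O(dp_d^3)=o(dp_d^2)$, which is negligible at the $dp_d$ scale at which the lemma is applied; I would either absorb this into the statement or record it as an explicit $o(dp_d^2)$ remainder. Two further bookkeeping points need care: $A$ must be bounded (consistent with the prior support $\subset[-b,-a]\cup[a,b]$) so that the log-normal moments $e^{\mu^2L}$, and hence the $\min$ and $\max$ over $A$, are finite; and the interchange of expectation with the $\mu$-integral needs Tonelli, which is immediate here since the integrands are measurable and the moments are finite.
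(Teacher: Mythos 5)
Your proposal is correct and follows essentially the same route as the paper: the i.i.d.\ reduction to a single sequence, the bound $\log(1+x)\ge x-\tfrac12 x^2$ for the expectation, and Jensen plus an elementary bound on $|\log(1+x)|$ for the variance, with the same closed-form mixture moments producing $D_1$, $C(\mu)$ and $D_2$. Your observation that $\log(1+x)\ge x-\tfrac12 x^2$ fails on $(-1,0)$ is well taken --- the paper applies it without qualification --- and your $O(dp_d^3)$ per-sequence correction, negligible at the $dp_d$ scale where the lemma is used, is the right way to patch that minor gap.
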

\begin{proof}
As $S_d(\mu)$ is the sum of $d$ iid terms we can rewrite the expectation and variance with a single term
\begin{align*}
  \mathbb{E} \left[  \int_{A} S_d(\mu) f(\mu) \mbox{d}\mu \right] &= d \mathbb{E}\left[ \int_{A} \log(1+p_d( X_k(\mu) - 1 )) f(\mu) \mbox{d}\mu \right] \\
\mathrm{Var}\left(  \int_{A} S_d(\mu) f(\mu) \mbox{d}\mu  \right) &= d \mathrm{Var}\left( \int_{A} \log(1+p_d( X_k(\mu) - 1 )) f(\mu) \mbox{d}\mu   \right).
\end{align*}
Now as $\log(1+x)>x-x^2/2$,
\begin{align*}
  \mathbb{E}\left[ \int_{A} \log(1+p_d( X_k(\mu) - 1 )) f(\mu) \mbox{d}\mu \right] &\geq \mathbb{E}\left[ \int_{A} \left( p_d( X_k(\mu) - 1 ) - \frac{p_d^2( X_k(\mu) - 1 )^2}{2} \right) f(\mu) \mbox{d}\mu \right] \\
&= p_d \int_{A} \left( \mathbb{E}[X_k(\mu) -1] - \frac{p_d}{2} \mathbb{E}[ (X_k(\mu)-1)^2 ]  \right) f(\mu) \mbox{d}\mu,
\end{align*}
which gives (\ref{eq:l1}). We then obtain (\ref{eq:l2}) by using (\ref{eq:m1}) and a similar calculation for the variance of $X_k(\mu)$.

We now consider the variance, which is bounded by the second moment. Using $|\log(1+x)|\leq |x|+x^2/2$ we have
\begin{eqnarray*}
 \lefteqn{\mathrm{Var}\left(  \int_{A} \log ( 1 + p_d(X_k(\mu) - 1) ) f(\mu) \mbox{d}\mu \right) \leq \mathbb{E} \left[ \left(  \int_{A}  \log( 1 + p_d(X_k(\mu) - 1) ) f(\mu) \mbox{d}\mu \right)^2   \right]} \\
&\leq& \mathbb{E} \left[  \int_{A}  \left\{\log( 1 + p_d(X_k(\mu) - 1))\right\}^2 f(\mu) \mbox{d}\mu \right]\\
& \leq & \mathbb{E} \left[ \int_{A} \left\{p_d^2(X_k(\mu) - 1)^2+p_d^3|X_k(\mu) - 1|^3+\frac{p_d^4}{4} (X_k(\mu) - 1)^4 \right\}f(\mu) \mbox{d}\mu\right] \\
&\leq & \max_{\mu\in A} \mathbb{E}\left\{p_d^2(X_k(\mu) - 1)^2\right\} \int_A f(\mu)\mbox{d}\mu+o(p_d^2),
\end{eqnarray*}
which gives the required bound for the variance.
\end{proof}

\begin{lemma}[Detection of abnormal segments]
\label{case_2}
Assume that we have an abnormal segment $[t,s]$. Let $\alpha_d$ be the probability of a sequence being abnormal and the mean of the abnormal observations be $\mu_0$, with $p_d = o(1)$. Assume that there exists a set $A$ such that for all $\mu \in A$ we have
\[
 \lim_{d\rightarrow \infty} \alpha_d\left( e^{\mu\mu_0(s-t+1)}- 1 \right)-\frac{p_d}{2}\left( e^{\mu^2(s-t+1)} - 1 \right)>\delta,
\]
and $\int_A \pi(\mu)\mbox{d}\mu>\delta'$, for some $\delta,\delta'>0$. 
If $dp_d^2\rightarrow \infty$ as $d\rightarrow \infty$ then
\begin{align}
\label{eq:abseg}
  \int \left\{ \prod_{k=1}^d \frac{P_{A,k}(t,s;\mu)}{P_{N,k}(t,s)}\right\} \pi(\mu) \mbox{d}\mu \rightarrow \infty
\end{align}
in probability as $d \rightarrow \infty$.
\end{lemma}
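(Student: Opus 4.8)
The plan is to reduce the integral to the exponential of a single real-valued random variable whose mean diverges while its standard deviation grows strictly more slowly, and then to conclude by Chebyshev's inequality. First I would recall that $P_{A,k}(t,s;\mu)/P_{N,k}(t,s) = 1 + p_d(X_k(\mu)-1)$, and that each such factor is at least $1 - p_d > 0$, so the product equals $\exp(S_d(\mu))$ and is in particular strictly positive. Since the integrand is nonnegative I may restrict the domain of integration to the set $A$ supplied by the hypothesis, giving the lower bound
\[
\int \exp(S_d(\mu)) \pi(\mu) \mbox{d}\mu \geq \int_A \exp(S_d(\mu)) \pi(\mu) \mbox{d}\mu.
\]

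Next I would normalise $\pi$ on $A$ to a probability density $f(\mu) = \pi(\mu)/\int_A \pi$, which is well-defined precisely because $\int_A \pi > \delta' > 0$, and apply Jensen's inequality to the convex function $\exp$. This converts the integral of an exponential into the exponential of an integral:
\[
\int_A \exp(S_d(\mu)) \pi(\mu)\mbox{d}\mu \geq \left(\int_A \pi(\mu)\mbox{d}\mu\right) \exp\left(\int_A S_d(\mu) f(\mu)\mbox{d}\mu\right) \geq \delta' \exp(T_d),
\]
where $T_d = \int_A S_d(\mu) f(\mu)\mbox{d}\mu$. It therefore suffices to show that $T_d \to \infty$ in probability.

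For this I would invoke Lemma \ref{abnormal_mean_var} with this choice of $f$, which gives $\mathbb{E}[T_d] \geq D_1 d p_d$ and $\mathrm{Var}(T_d) \leq D_2 d p_d^2 + o(d p_d^2)$. The first task is to check that $D_1$ is bounded below by a positive constant for large $d$: by hypothesis the leading two terms defining $D_1$ exceed $\delta$ uniformly on $A$, while the correction term $\alpha_d p_d C(\mu)/2$ is $O(\alpha_d p_d) = o(1)$ uniformly on the bounded set $A$ (since $\alpha_d \leq 1$ and $p_d = o(1)$), so that $D_1 \geq \delta/2$ eventually. Hence $\mathbb{E}[T_d] \geq (\delta/2) d p_d$, which diverges because $d p_d = (d p_d^2)/p_d \to \infty$. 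Then I would apply Chebyshev's inequality, noting that the ratio of standard deviation to mean is of order $\sqrt{d p_d^2}/(d p_d) = 1/\sqrt{d} \to 0$:
\[
\Pr\left(T_d \leq \tfrac{1}{2}\mathbb{E}[T_d]\right) \leq \frac{4\,\mathrm{Var}(T_d)}{(\mathbb{E}[T_d])^2} = O\!\left(\frac{d p_d^2}{d^2 p_d^2}\right) = O(d^{-1}) \to 0,
\]
so $T_d \geq \tfrac{1}{2}\mathbb{E}[T_d] \to \infty$ with probability tending to one, and combining with the displays above yields \eqref{eq:abseg}.

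The main obstacle I anticipate is not the probabilistic concentration, which is a routine Chebyshev argument once the moment bounds of Lemma \ref{abnormal_mean_var} are in hand, but rather the bookkeeping needed to keep $D_1$ bounded away from zero: one must verify that the correction term $\alpha_d p_d C(\mu)/2$ is genuinely negligible uniformly over $A$, and that the sign hypothesis is used to control the dominant contribution to $\mathbb{E}[T_d]$. The only subtlety in the Jensen step is the positivity $\int_A \pi > 0$, which is exactly the assumption $\int_A \pi(\mu)\mbox{d}\mu > \delta'$.
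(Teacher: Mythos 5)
Your proposal is correct and follows essentially the same route as the paper's proof: restrict the integral to $A$, apply Jensen's inequality to the logarithm to reduce the problem to $T_d=\int_A S_d(\mu)f(\mu)\,\mbox{d}\mu$, invoke Lemma \ref{abnormal_mean_var} for the mean and variance bounds, and finish with Chebyshev's inequality. Your added care in normalising $\pi$ on $A$ and in checking that the $\alpha_d p_d C(\mu)/2$ correction is uniformly negligible only makes explicit steps the paper leaves implicit.
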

\begin{proof}
If we restrict the integral in \eqref{eq:abseg} to one over $A \subset \mathbb{R}$ we get a lower bound. Then rewriting the ratio in \eqref{eq:abseg}, using \eqref{eq:P1}, in terms of $X_k(\mu)$ we get  
\begin{align*}
\int \left\{  \prod_{k=1}^{d} \left[ 1 + p_d(X_k(\mu) - 1) \right] \right\} \pi(\mu) \mbox{d}\mu &\geq \int_{A} \left\{  \prod_{k=1}^{d} \left[ 1 + p_d(X_k(\mu) - 1) \right] \right\} \pi(\mu) \mbox{d}\mu. 
\end{align*}
If we consider the logarithm of the above random variable and use Jensen's inequality we get a lower bound
\begin{align*}
  \log \left( \int_{A} \left\{  \prod_{k=1}^{d} \left[ 1 + p_d(X_k(\mu) - 1) \right] \right\} \pi(\mu) \mbox{d}\mu \right) 
&\geq   \int_{A} \left\{ \sum_{k=1}^{d}  \log( 1 + p_d( X_k(\mu) - 1 ) ) \right\} \pi(\mu) \mbox{d}\mu \\
&=  \int_{A} S_d(\mu) \pi(\mu) \mbox{d}\mu.
\end{align*}
Then if we can show this random variable goes to $\infty$ as $d \rightarrow \infty$ the original random variable has the same limit. 
Let $T_d =  \int_{A} S_d(\mu)\pi(\mu) \mbox{d}\mu$. Using Lemma \ref{abnormal_mean_var}, we have
\[
 \mbox{E}(T_d)> \log(\delta')+\delta dp_d, 
\]
and for sufficiently large $d$ there exists a constant $C$ such that
\[
 \mbox{Var}(T_d)< C dp_d^2.
\]
So by Chebyshev's inequality
\begin{align*}
  \Pr( T_d \leq \log(\delta')+\delta dp_d - dp_d^2 ) \leq \Pr( \lvert T_d -\mathbb{E}T_d \rvert \geq dp_d^2 ) \leq \frac{ \mathrm{Var}(T_d) }{d^2p_d^4} < \frac{C}{dp_d^2}.
\end{align*}
Thus $T_d \rightarrow \infty$  in probability as $d\rightarrow \infty$, which implies (\ref{eq:abseg}).
\end{proof}

\bibliography{bibliog}

\begin{thebibliography}{}

\bibitem[Barry and Hartigan, 1992]{Barry/Hartigan:1992}
Barry, D. and Hartigan, J.~A. (1992).
\newblock Product partition models for change point problems.
\newblock {\em The Annals of Statistics}, 20(1):260--279.

\bibitem[Berger, 1985]{decision_theory_berger}
Berger, J. (1985).
\newblock {\em Statistical decision theory and Bayesian analysis}.
\newblock Springer series in statistics. Springer, New York, NY [u.a.], 2. ed
  edition.

\bibitem[Cox, 1962]{cox1962renewal}
Cox, D. (1962).
\newblock {\em Renewal Theory}.
\newblock Methuen's monographs on applied probability and statistics. Methuen.

\bibitem[Fearnhead, 2006]{Fearnhead2006}
Fearnhead, P. (2006).
\newblock Exact and efficient {B}ayesian inference for multiple changepoint
  problems.
\newblock {\em Statistics and Computing}, 16(2):203--213.

\bibitem[Fearnhead and Liu, 2007]{FearnheadLiu2007}
Fearnhead, P. and Liu, Z. (2007).
\newblock On-line inference for multiple changepoint problems.
\newblock {\em Journal of the Royal Statistical Society B}, 69:589--605.

\bibitem[Fearnhead and Vasileiou, 2009]{FearnheadVasileiou2009}
Fearnhead, P. and Vasileiou, D. (2009).
\newblock Bayesian analysis of isochores.
\newblock {\em Journal of the American Statistical Association},
  104(485):132--141.

\bibitem[Frick et~al., 2014]{Frick:2014}
Frick, K., Munk, A., and Sieling, H. (2014).
\newblock Multiscale change point inference.
\newblock {\em Journal of the Royal Statistical Society: Series B (Statistical
  Methodology)}, 76(3):495--580.

\bibitem[Galeano et~al., 2006]{2006}
Galeano, P., Peña, D., and Tsay, R.~S. (2006).
\newblock Outlier detection in multivariate time series by projection pursuit.
\newblock {\em Journal of the American Statistical Association}, 101(474):pp.
  654--669.

\bibitem[Jandhyala et~al., 2013]{JTSA:JTSA12035}
Jandhyala, V., Fotopoulos, S., MacNeill, I., and Liu, P. (2013).
\newblock Inference for single and multiple change-points in time series.
\newblock {\em Journal of Time Series Analysis}.

\bibitem[Jeng et~al., 2013]{rare_copy_number}
Jeng, X.~J., Cai, T.~T., and Li, H. (2013).
\newblock Simultaneous discovery of rare and common segment variants.
\newblock {\em Biometrika}, 100(1):157--172.

\bibitem[Jin, 2004]{jin}
Jin, J. (2004).
\newblock {\em Detecting a target in very noisy data from multiple looks},
  volume Volume 45 of {\em Lecture Notes--Monograph Series}, pages 255--286.
\newblock Institute of Mathematical Statistics, Beachwood, Ohio, USA.

\bibitem[Kulkarni, 2012]{kulkarni2012introduction}
Kulkarni, V. (2012).
\newblock {\em Introduction to Modeling and Analysis of Stochastic Systems}.
\newblock Springer Texts in Statistics. Springer London, Limited.

\bibitem[Levine and Casella, 2001]{Levine2001}
Levine, R.~A. and Casella, G. (2001).
\newblock {Implementations of the Monte Carlo EM Algorithm}.
\newblock {\em Journal of Computational and Graphical Statistics},
  10(3):422--439.

\bibitem[Olshen et~al., 2004]{Olshen01102004}
Olshen, A.~B., Venkatraman, E.~S., Lucito, R., and Wigler, M. (2004).
\newblock Circular binary segmentation for the analysis of array based {DNA}
  copy number data.
\newblock {\em Biostatistics}, 5(4):557--572.

\bibitem[Pinto et~al., 2011]{Pinto}
Pinto, D., Darvishi, K., Shi, X., Rajan, D., Rigler, D., Fitzgerald, T.,
  Lionel, A.~C., Thiruvahindrapuram, B., MacDonald, J.~R., Mills, R., Prasad,
  A., Noonan, K., Gribble, S., Prigmore, E., Donahoe, P.~K., Smith, R.~S.,
  Park, J.~H., Hurles, M.~E., Carter, N.~P., Lee, C., Scherer, S.~W., and Feuk,
  L. (2011).
\newblock Comprehensive assessment of array-based platforms and calling
  algorithms for detection of copy number variants.
\newblock {\em Nature Biotechnology}, 29(6):512--521.

\bibitem[Qu et~al., 2005]{1387011}
Qu, G., Hariri, S., and Yousif, M. (2005).
\newblock Multivariate statistical analysis for network attacks detection.
\newblock In {\em Computer Systems and Applications, 2005. The 3rd ACS/IEEE
  International Conference on}, pages 9--.

\bibitem[Siegmund et~al., 2011]{siegmund2011}
Siegmund, D., Yakir, B., and Zhang, N.~R. (2011).
\newblock Detecting simultaneous variant intervals in aligned sequences.
\newblock {\em The Annals of Applied Statistics}, 5(2A):645--668.

\bibitem[Spiegel et~al., 2011]{Spiegel:2011:PRC:2003653.2003657}
Spiegel, S., Gaebler, J., Lommatzsch, A., De~Luca, E., and Albayrak, S. (2011).
\newblock Pattern recognition and classification for multivariate time series.
\newblock In {\em Proceedings of the Fifth International Workshop on Knowledge
  Discovery from Sensor Data}, SensorKDD '11, pages 34--42, New York, NY, USA.
  ACM.

\bibitem[Tsay et~al., 2000]{Tsay01122000}
Tsay, R.~S., Peña, D., and Pankratz, A.~E. (2000).
\newblock Outliers in multivariate time series.
\newblock {\em Biometrika}, 87(4):789--804.

\bibitem[Wyse et~al., 2011]{WYSE2011}
Wyse, J., Friel, N., and Rue, H. (2011).
\newblock Approximate simulation-free {B}ayesian inference for multiple
  changepoint models with dependence within segments.
\newblock {\em Bayesian Analysis}, 6(4):501--528.

\bibitem[Yau and Holmes, 2010]{loss_holmes}
Yau, C. and Holmes, C.~C. (2010).
\newblock {A decision theoretic approach for segmental classification using
  Hidden Markov models}.

\bibitem[Zhang, 2010]{overview_CNV}
Zhang, N. (2010).
\newblock {DNA} copy number profiling in normal and  tumor genomes.
\newblock In Feng, J., Fu, W., and Sun, F., editors, {\em Frontiers in
  Computational and Systems Biology}, volume~15 of {\em Computational Biology},
  pages 259--281. Springer London.

\bibitem[Zhang et~al., 2010]{detec_simul_gene_copy}
Zhang, N.~R., Siegmund, D.~O., Ji, H., and Li, J.~Z. (2010).
\newblock Detecting simultaneous changepoints in multiple sequences.
\newblock {\em Biometrika}, 97(3):631--645.

\end{thebibliography}
\end{document}